\let\cite\citep 
\renewcommand*\env@matrix[1][\arraystretch]{%
  \edef\arraystretch{#1}%
  \hskip -\arraycolsep
  \let\@ifnextchar\new@ifnextchar
  \array{*\c@MaxMatrixCols c}}
\renewcommand\P{\mathcal{P}}
\newcommand\M{\mathcal{M}}
\newcommand\RR{\mathbb{R}}
\newcommand\CC{\mathbb{C}}
\renewcommand\1{\textbf{1}}
\newcommand\id{\textit{id}}
\renewcommand\H{\mathcal{H}}
\renewcommand\S{\mathcal{S}}
\newcommand\U{\mathcal{U}}
\newcommand\SL{\mathcal{SL}}
\newcommand\K{\mathcal{K}}
\newcommand\J{\mathcal{J}}
\newcommand\W{\mathcal{W}}
\newcommand\vphi{\varphi}
\renewcommand\l{\text{\tiny{L}}}
\newcommand\ww{\text{\tiny{W}}}
\renewcommand\ss{\text{s}}
\newcommand\n{\text{\tiny{N}}}
\newcommand\sP{\mathsf{P}}
\newcommand\sC{\mathsf{C}}
\newcommand\sT{\mathsf{T}}
\newcommand\sW{\mathsf{W}}
\newcommand\sS{\mathsf{S}}
\newcommand\sM{\mathsf{M}}
\newcommand\sE{\mathsf{E}}
\renewcommand\epsilon{\varepsilon}
\newcommand\rarrow{\rightarrow}
\newcommand\LieG{\mathfrak{g}}
\newcommand\LieH{\mathfrak{h}}
\newcommand\su{\mathfrak{su}}
\newcommand\so{\mathfrak{so}}
\renewcommand\sl{\mathfrak{sl}}
\newcommand\co{\mathfrak{co}}
\newcommand\LieK{\mathfrak{k}}
\newcommand\LieJ{\mathfrak{j}}
\renewcommand\t{\widetilde}
\newcommand\h{\widehat}
\renewcommand\b{\bar }
\newcommand\w{\wedge}
\renewcommand\d{\partial}
\newcommand\s{\sigma}
\renewcommand\-{^{-1}}
\newcommand\Ad{\text{Ad}}
\newcommand\ad{\text{ad}}
\renewcommand\id{\text{id}}
\renewcommand\1{\mathds{1}}
\DeclareMathOperator{\Aut}{Aut}
\DeclareMathOperator{\Tr}{Tr}
\newtheorem{thm}{Theorem}
\newtheorem{cor}[thm]{Corollary}
\newtheorem{prop}[thm]{Proposition}
\theoremstyle{definition}
\begin{document}

\title{Tractors and Twistors from conformal Cartan geometry: a gauge theoretic approach \\ 
\center II. Twistors}
\author{J. Attard,${\,}^a$ \hskip 0.7mm J. François }
\date{}

\maketitle
\begin{center}

\vskip -0.8cm
${}^a$ Aix Marseille Univ, Université de Toulon, CNRS, CPT, Marseille, France
\end{center}

\begin{abstract}
Tractor and Twistor bundles both provide natural conformally covariant calculi on $4D$-Riemannian manifolds. They have different origins but are closely related, and usually constructed bottom-up through prolongation of defining differential equations. We propose alternative top-down gauge theoretic constructions, starting from the conformal Cartan bundle $\P$ and its vectorial $E$ and spinorial $\sE$ associated bundles.
Our key ingredient is the dressing field method of gauge symmetry reduction, which allows to exhibit tractors and twistors and their associated connections as gauge fields of a non-standard kind as far as Weyl rescaling symmetry is concerned. By \emph{non-standard} we mean that they implement the gauge principle of physics, but are of a different geometric nature than the well known differential geometric objects usually underlying gauge theories.   
We provide the corresponding BRST treatment. In a companion paper we dealt with tractors, in the present one we address the case of twistors. 
\end{abstract}

\textit{Keywords}: twistors, differential geometry,  Cartan geometry, conformal symmetry, gauge theory, BRST algebra.


PACS numbers:  02.40.Hw, 11.15.-q,11.25.Hf.



\tableofcontents

\section{Introduction}  

Twistor theory probably needs no introduction. Let us just remind that it was devised by Penrose in
the $60$'s and early $70$'s as an alternative framework for physics - and quantum gravity - in which conformal symmetry is pivotal \cite{Penrose1960, Penrose1967, Penrose1968, Penrose-McCallum72}. In twistor theory spinors quantities takes over the role of tensors and the - conformally compactified - Minkowski space is seen as secondary, emerging from a more fundamental twistor space hoped to be more fit for quantization.  A generalization to arbitrary pseudo-Riemannian manifolds gave rise to the concept of local twistors, which provides a conformal spinorial calculus. 
The standard reference text is Penrose and Rindler \cite{Penrose-Rindler-vol1, Penrose-Rindler-vol2}.
Twistor theory remains an active area of research in physics.

Tractors are perhaps not as well known from physicists, but more so from mathematicians. 
In the period of renewal of differential geometry sparked by Einstein's General Relativity, around $1922$-$1926$ Cartan developed the notions of moving frames and \emph{espace généralisés}. These are manifolds with torsion in addition to curvature, classic examples being manifolds endowed with projective and conformal connections. Cartan's work  became widespread and lead to the further development by Whitney and Ehresmann of connections on fibered manifolds, known to be the underlying geometry of Yang-Mills theories.
 In $1925$-$26$, Thomas developed, independently from Cartan,  a calculus on torsionless conformal (and projective) manifolds, analogous to Ricci calculus on Riemannian manifolds.  His work was rediscovered and expanded in \cite{Bailey-et-al94} where it was given its modern guise as a vector bundle  called \emph{standard tractor bundle} endowed with a linear connection, the \emph{tractor connection}. In recent years this conformal tractor calculus has been of interest for physicists, see e.g \cite{Gover-Shaukat-Waldron09, Gover-Shaukat-Waldron09-2}
\medskip

Tractors and twistors are closely linked. It has been noticed that both define vector bundles associated to the conformal Cartan principal bundle $\P(\M, H)$ - with $H$ the parabolic subgroup of the conformal group $S\!O(2, 4)$
\footnote{First introduced to physics, according to \cite{Fulton_et_al1962}, by \cite{Cunningham} and \cite{Bateman1909, Bateman1910} in connection with Special Relativity and the invariance of Maxwell's equations.} 
 comprising Lorentz, Weyl and conformal boost symmetries - and that tractor and twistor connections are induced by the so-called \emph{normal Cartan connection} $\varpi_\n$ on $\P$. 
 
In standard presentations however, both tractors and local twistors are constructed through the prolongation of defining differential equations defined on a Riemannian manifold $(\M, g)$: the Almost Einstein equation and Twistor Equation respectively. The systems thus obtained are linear and closed, so that they can be rewritten as  linear operators acting on multiplets of variables called parallel tractors and global twistors respectively. The  behavior of the latter under Weyl rescaling of the underlying metric is given by definition and commutes with the actions of their associated  linear operators, which are then respectively called tractor and twistor connections. The multiplets are then seen as parallel sections  of  vectors bundles, the tractor and local twistor bundles, endowed with their linear connections. 
For the procedure in the tractor case see \cite{Bailey-et-al94}, or \cite{Curry-Gover2015} for a more recent and detailed review. For the twistor case see the classic \cite{Penrose-Rindler-vol2}, or \cite{Bailey-Eastwood91} which generalizes the twistor construction to paraconformal (PCF) manifolds. 
This constructive procedure via prolongation has been deemed more explicit \cite{Bailey-Eastwood91}, more intuitive and direct \cite{Bailey-et-al94} than the viewpoint in terms of vector bundles associated to $\P(\M, H)$. Since it starts from $(\M, g)$ to built  a gauge structure on top of it - vector bundles endowed with connections - we may call-it a ``bottom-up'' approach.  
\medskip

In this paper and its companion we put forward a ``top-down'' approach to tractors and twistors that relies on a gauge theoretic method of gauge symmetry reduction: the dressing field method. Given a gauge structure (fiber bundles with connections) on $\M$, this method allows to systematically construct partially gauge-invariant composite fields built from the usual gauge fields and a so-called \emph{dressing field}. According to the transformations of the latter under  residual gauge symmetries, the composite fields  display interesting properties. In a noticeable case they are actually gauge fields of non-standard kind, meaning that they implement the gauge principle but are not of the same geometric nature as the usual Yang-Mills fields. The  method fits in the BRST framework. 

The common gauge structure on $\M$ we start with is the conformal Cartan bundle $\P(\M, H)$ endowed with a Cartan connection. If we add the vector bundle $E$ associated to the defining representation $\RR^6$ of $H$, the dressing approach allows to erase the conformal boost symmetry and to recover tractors and the tractor connection. This has been detailled in \cite{Attard-Francois2016}. In this paper, we consider the vector bundle  associated to the spin representation $\CC^4$ of $S\!O(2, 4)$ and $H$. From its dressing we shall obtain tractors and a generalized twistor connection induced by the spin representation of the Cartan connection. When the normal Cartan connection is considered, we recover the standard twistor connection. We stress that  twistors thus obtained, while being genuine standard gauge fields with respect to (w.r.t) Lorentz gauge symmetry, are examples of non-standard gauge fields alluded to above w.r.t Weyl gauge symmetry. This, we think, is a new consideration worth emphasizing. 
\medskip

We don't want to force the reader to skip through the companion paper  \cite{Attard-Francois2016} to find definitions and notations, much less to gather results. To the advantage of the reader, we would rather  make the present paper as self-contained as possible. But to do so we are bound to duplicate significant background material in the first two sections. 
Then, in section \ref{The geometry of gauge fields} we review the basics of differential geometry underlying gauge theories, including Cartan geometry, as well as the BRST formalism, so as to fix notations and define important notions. 

In section \ref{Reduction of gauge symmetries: the dressing field method} we review the dressing field method of gauge symmetry reduction. We provide a number of general propositions, proofs of  which can be found in the corresponding section of \cite{Attard-Francois2016}. We emphasis, as a result of the method,  the possible emergence of gauge fields of a non-standard kind which we characterize. 

Finally,  section \ref{Twistors from conformal Cartan geometry via dressing} starts with a brief review of the ``bottom-up'' procedure for twistors. Then we describe the conformal Cartan bundle as well as the group morphism $H \rarrow \b H\subset S\!U(2, 2)$ and the Lie algebra morphism $\so(2, 4) \rarrow \su(2, 2)$,  rediscovering for ourselves  results of  \cite{Klotz74}. At last, we put this material to use in constructing twistors and the twistor connection, ``top-down'' through  dressing. In doing so we reproduce and generalize results of  \cite{Friedrich77} connecting the normal Cartan connection and the twistor connection. Residual Lorentz and Weyl gauge symmetries are analyzed both at the finite and BRST level. Finally, we give a geometrically clear way to recover the results of \cite{Merkulov1984_I} concerning a twistorial approach to Weyl gravity, and formulate a critic of its unification with electromagnetism as proposed in \cite{Merkulov1984_II}. 
We summarize our results and gather our comments  in our conclusion \ref{Conclusion}.

\section{The geometry of gauge fields}   
\label{The geometry of gauge fields}   

Gauge theories are a cornerstone of modern physics built on the principle that the fundamental interactions originate from local symmetries called gauge symmetries. 
The mathematics underlying classical gauge theories is now widely known to be the differential geometry of fiber bundles and connections supplemented by the differential algebraic BRST approach.
In order to fix notations, we briefly recall the basic features of these in this section.

\subsection{Basic differential geometry} 
\label{Basic differential geometry} 

Let $\P(\M, H)$ be a principal fiber bundle over a smooth $n$-dimensional manifold $\M$, with structure Lie group $H$ and projection map $\pi: \P \rarrow \M$.  Given a representation $(\rho, V)$ for $H$ we have the associated bundle $E:=P\times_\rho V$, whose sections are in bijective correspondence with $\rho$-equivariant maps on $\P$: $\t\vphi \in \Gamma(E) \leftrightarrow \vphi \in \Lambda^0(\P, \rho)$.

Given the right-action  $R_hp=ph$ of $H$ on $\P$, a  $V$-valued $n$-form $\beta$ is  said $\rho$-equivariant iff $R^*_h\beta=\rho(h\-)\beta$. Let $X^v\in V\P\subset T\P$ be a vertical vector field induced by the infinitesimal action of $X \in \LieH=$ Lie$H$ on $\P$. A form $\beta$ is said horizontal if $\beta(X^v, \ldots)=0$. A form $\beta$ is said $(\rho, V)$-tensorial if it is both horizontal and $\rho$-equivariant. 
\smallskip

Let $\omega \in \Lambda^1(\P, \LieH)$ be a choice of connection on $\P$: it is $\Ad$-equivariant and satisfies $\omega(X^v)=X$. The horizontal subbundle $H\P \subset T\P$, the non-canonical complement of $V\P$, is defined as  $\ker \omega$. Given $Y^h \in H\P$ the horizontal projection of a vector field $Y \in T\P$, the covariant derivative of a $p$-form $\alpha$ is defined by $D\alpha:=d\alpha(Y_1^h, \ldots, Y_p^h)$. 

The connection's curvature form $\Omega \in \Lambda^2(\P, \LieH)$  is defined as its covariant derivative, but is algebraically given by the Cartan structure equation $\Omega=d\omega+\tfrac{1}{2}[\omega, \omega]$. 
Given a $(\rho, V)$-tensorial $p$-forms $\beta$ on $\P$,  its covariant derivative  is a $(\rho, V)$-tensorial $(p+1)$-form  algebraically given by $D\beta= d\beta +\rho_*(\omega)\beta$.  Furthermore  $D^2\beta=\rho_*(\Omega)\beta$. 

Since $\vphi$ is a $(\rho, V)$-tensorial $0$-form, its covariant derivative is the $(\rho, V)$-tensorial $1$-form $D\vphi:=d\vphi + \rho_*(\omega)\vphi$. The section $\vphi$ is said parallel if $D\vphi=0$. One can show that the curvature $\Omega$ is a $(\Ad, \LieH)$-tensorial $2$-form, so its covariant derivative is $D\Omega=d\Omega +\ad(\omega)\Omega=d\Omega +[\omega, \Omega]$. Given the Cartan structure equation, this vanishes identically and provides the Bianchi identity $D\Omega=0$.

Given a local section $\s: \U \subset \M \rarrow \P$, we have that $\s^*\omega \in \Lambda^1(\U, \LieH)$ is a Yang-Mills gauge potential, $\s^*\Omega \in \Lambda^2(\U, \LieH)$ is the Yang-Mills field strength and $\s^*\vphi$ is a matter field, while $\s^*D\vphi=d\s^*\vphi+\rho_*(\s^*\omega)\s^*\vphi$ is the minimal coupling of the matter field to the gauge potential. 
\smallskip

The group of vertical automorphisms of $\P$, $\Aut_v(\P):=\left\{\Phi:\P \rarrow \P \ |\  h\in H, \Phi(ph)=\Phi(p)h \text{ and } \pi \circ \Phi= \Phi \right\}$ is isomorphic to the gauge group $\H:=\left\{ \gamma :\P \rarrow H\ | \  R^*_h\gamma(p)=h\- \gamma(p) h  \right\}$, the isomorphism being $\Phi(p)=p\gamma(p)$. The composition law of $\Aut_v(\P)$, $\Phi_2^*\Phi_1:=\Phi_1 \circ \Phi_2$, implies that the gauge group acts on itself by $\gamma_1^{\gamma_2}:=\gamma_2^{-1} \gamma_1 \gamma_2$. 

The gauge group $\H \simeq \Aut_v(\P)$ acts on the connection, curvature and $(\rho, V)$-tensorial forms as,
\begin{align}
\label{ActiveGT}
&\omega^\gamma:=\Phi^*\omega=\gamma\-\omega\gamma + \gamma\- d\gamma, \quad \Omega^\gamma:=\Phi^*\Omega=\gamma\-\Omega \gamma, \\ 
&\vphi^\gamma:= \Phi^*\vphi=\rho(\gamma\-)\vphi, \quad  \text{and} \quad (D\vphi)^\gamma=D^\gamma \vphi^\gamma=\Phi^*D\vphi=\rho(\gamma\-)D\vphi.\notag
\end{align}
These are \emph{active} gauge transformations, formally identical but to be conceptually distinguished from \emph{passive} gauge transformations relating two local descriptions of the same global objects. Given two local sections related via  $\s_2=\s_1 h$, either over the same open set $\U$ of $\M$ or over the overlap of two open sets $\U_1 \cap \U_2$, one finds
\begin{align}
\label{PassiveGT}
&\s_2^*\omega=h\-\s_1^*\omega\  h + h\- dh, \quad \s_2^*\Omega=h\-\s_1^*\Omega\  h, \\ 
&\s_2^*\vphi= \rho(h\-)\s_1^*\vphi, \quad  \text{and} \quad \s_2^*D\vphi=\rho(h\-)\s_1^*D\vphi.\notag
\end{align}
This distinction active vs passive gauge transformations is reminiscent of the distinction diffeomorphisms vs coordinate transformations in General Relativity. 
\smallskip

If the manifold is equipped with a $(r, s)$-Lorentzian metric allowing for a Hodge star operator, and if $V$ is equipped with an inner product $\langle\ , \rangle$, then the prototypical Yang-Mills Lagrangian $m$-form  for a gauge theory is 
\begin{align*}
L(\s^*\omega, \s^*\vphi)=\tfrac{1}{2}\Tr[\s^*\Omega \w * (\s^*\Omega) ]+ \langle \s^*D\vphi, *\s^*D\vphi\rangle - U(\s^*\vphi),
\end{align*}
where $U$ is a potential term for the matter field, as is necessary for the spontaneous symmetry breaking (SSB) mechanism in the electroweak sector of the Standard Model. 

\subsection{Cartan geometry} 
\label{Cartan geometry} 

Connections $\omega$ on $\P$ such as described, known as Ehresmann or principal connections, are well suited to describe Yang-Mills fields of gauge theory. They are the heirs of another notion of connection, best suited  to describe gravity in a gauge theoretic way: Cartan connections. 
A Cartan connection $\varpi$ on a principal bundle $\P(M, H)$, beside satisfying the two defining properties of a principal connection,  defines an absolute parallelism on $\P$. A bundle equipped with a Cartan connection is a Cartan geometry, noted $(\P, \varpi)$. 

Explicitly, given a Lie algebra $\LieG \supset \LieH$ with dim $\LieG=$ dim $T_p\P$ for which a group is not necessarily chosen, a Cartan connection is $\varpi \in \Lambda^1(\P, \LieG)$ satisfying: $\varpi (X^v)=X$, $R^*_h\varpi =\Ad_{h\-}\varpi$ and $\varpi_p : T_p\P \rarrow \LieG$ is a linear isomorphism $\forall p \in \P$. This last defining property implies that the geometry of the bundle $\P$ is much more intimately related to the geometry of the base spacetime manifold $\M$, hence the fitness of Cartan geometry to describe gravity in the spirit of Einstein's insight. Concretely one can show that $T\M \simeq \P\times_H \LieG/\LieH$, and the image of $\varpi$ under the projection $\tau: \LieG \rarrow \LieG/\LieH$ defines a generalized soldering form, $\theta:=\tau(\varpi)$. The latter, more commonly known as the vielbein in the physics literature, implements (a version of) the equivalence principle and accounts for the specificities of gravity among other gauge interactions.  The $(\Ad, \LieG)$-tensorial curvature $2$-form $\Omega'$ of $\varpi$ is defined through the Cartan structure equation: $\Omega'=d\varpi + \tfrac{1}{2}[\varpi, \varpi]$. Its $\LieG/\LieH$-part is the torsion $2$-form $\Theta:=\tau(\Omega')$.

Given a $\Ad_H$-invariant bilinear form $\eta$ of signature $(r, s)$  on $\LieG/\LieH$, a $(r, s)$-metric $g$  on $\M$ is induced  by  $\varpi$ according to $g(X, Y):=\eta\left( \s^*\theta(X), \s^*\theta(Y)\right)$, for $X, Y \in T\M$ and $\s :\U\subset \M \rarrow \P$ a trivializing section. 

In the case $\LieG$ admits a $\Ad_H$-invariant splitting $\LieG=\LieH + \LieG/\LieH$, the Cartan geometry is said reductive. Then one has $\varpi=\omega+\theta$, where $\omega$ is a principal $H$-connection, and $\Omega'=\Omega + \Theta$ with $\Omega$ the curvature of $\omega$. 
As an example, the Cartan geometry with $(\LieG,\LieH)$ the Euclid and rotations Lie algebras is Riemann geometry with torsion.

Given a group $G$ and a closed subgroup $H$, $G/H$ is a homogeneous manifold and $G \xrightarrow{\pi} G/H$ is a $H$-principal bundle. The Maurer-Cartan form $\varpi_\text{\tiny G}$ on $G$ is a flat Cartan connection. So $(G, \varpi_\text{\tiny G})$ is a flat Cartan geometry, sometimes referred to as the Klein model for the geometry $(\P, \varpi)$,  which is thus said to be of type $(G, H)$. 

Let $V$ be a $(\LieG, H)$-module, i.e it supports a $\LieG$-action $\rho_*$ and a $H$-representation $\rho$ whose differential coincides with the restriction of the  $\LieG$-action  to $\LieH$. The Cartan connection defines a covariant derivative on $(\rho, V)$-tensorial forms. On sections of associated bundles, i.e on $\rho$-equivariant maps $\vphi$, we have: $D\vphi:=d\vphi + \rho_*(\varpi)\vphi$. As usual $D^2 \vphi=\rho_*(\Omega')\vphi$. On the curvature it gives the Bianchi identity: $D\Omega'=d\Omega'+[\varpi, \Omega']=0$.

The gauge group $\H \simeq \Aut_v(\P)$ acts on $\varpi$ and $\Omega'$ as it does on $\omega$ and $\Omega$ in \eqref{ActiveGT}.  The definition of local representatives via sections of $\P$, local gauge transformations and gluing properties thereof proceeds as in the standard case. 
\medskip


\subsection{The BRST framework}  
\label{The BRST framework}  

The infinitesimal version of \eqref{ActiveGT} can be captured by the so-called BRST differential algebra. 
 Abstractly \cite{Dubois-Violette1987} it is a   bigraded  differential algebra generated by  $\{\omega, \Omega, v, \chi\}$ where $v$ is the so-called ghost and the generators are respectively of degrees $(1, 0)$, $(2, 0)$, $(0, 1)$ and $(1, 1)$.  It is endowed with two nilpotent antiderivations $d$ and $s$, homogeneous of degrees $(1, 0)$ and $(0, 1)$ respectively, with vanishing anticommutator: $d^2=0=s^2$, $sd+ds=0$.  The algebra is equipped with a bigraded commutator $[\alpha, \beta]:=\alpha\beta -(-)^{\text{deg}[\alpha]\text{deg}[\beta]}\beta\alpha$. Notice that if the commutator vanishes identically, the BRST algebra is a  bigraded commutative differential  algebra.  
The action of $d$ is defined on the generators by: $d\omega=\Omega -\tfrac{1}{2}[\omega, \omega]$ (Cartan structure equation), $d\Omega=[\Omega, \omega]$ (Bianchi identity), $dv=\chi$ and $d\chi=0$. The action of the BRST operator on the generators gives the usual defining relations of the BRST algebra,
\begin{align}
\label{BRST}
s\omega=-dv -[\omega, v], \quad s\Omega=[\Omega, v], \quad  \text{ and } \quad sv=-\tfrac{1}{2}[v, v]. 
\end{align}
 Defining the degree $(1, 1)$ homogeneous antiderivation $\t d:=d+s$ and so-called algebraic connection $\t \omega:=\omega + v$,  \eqref{BRST} can be compactly rewritten as $\t \Omega:= \t d \t \omega + \tfrac{1}{2}[\t \omega, \t \omega]=\Omega$.  This is known as the ``russian formula''  \cite{Stora1984, Manes-Stora-Zumino1985} or ``horizontality condition'' \cite{Baulieu-TMieg1984, Baulieu-Bellon1986}. 
One is free to supplement this algebra with an element $\vphi$ of degrees $(0, 0)$  supporting a linear representation $\rho_*$ of the algebra as well as the action of the antiderivations,  so that upon defining $D:=d + \rho_*(\omega)$ one has consistently $D^2\vphi=\rho_*(\Omega)\vphi$ and 
\begin{align}
\label{BRST2}
s\vphi=-\rho_*(v)\vphi, \quad \text{ and } \quad sD\vphi=-\rho_*(v)D\vphi.
\end{align}

When the abstract BRST algebra is realized in the above differential geometric setup, the bigrading is according to the de Rham form degree and  ghost degree, $d$ is the de Rham differential on $\P$ (or $\M$) and $s$ is the de Rham operator on $\H$. The ghost is the Maurer-Cartan form on $\H$ so that $v \in \Lambda^1(\H, \text{Lie}\H)$, and given $\xi \in T\H$, $v(\xi) :\P \rarrow \LieH  \in \text{Lie}\H$ \cite{Bonora-Cotta-Ramusino}.  So in practice the ghost can be seen as a map $v:  \P\rarrow \LieH \in \text{Lie}\H$, a place holder that takes over the role of the infinitesimal gauge parameter. Thus the first two  relations of \eqref{BRST} and \eqref{BRST2} reproduce the infinitesimal gauge transformations of the gauge fields \eqref{ActiveGT}, while the third equation in \eqref{BRST} is the Maurer-Cartan structure equation for the gauge group $\H$. 

The BRST framework provides an algebraic way to characterize relevant quantities in gauge theories, such as admissible Lagrangian forms, observables and anomalies. Quantities of degree $(r, g)$ that are $s$-closed, that is $s$-cocycles $\in Z^{r,g}(s):=\ker s$, are gauge invariant. Quantities of degree $(r, g)$ that  are $s$-exact are $s$-coboundaries $\in B^{r,g}(s):=\text{Im } s$. Since $s^2=0$ obviously $B^{r,g}(s) \subset Z^{r,g}(s)$ and one defines the $s$-cohomology group $H^{r, g}(s):=Z^{r,g}(s)/B^{r,g}(s)$, elements of which  differing only by a coboundary, $c'=c+sb$, define the same cohomology class. Non-trivial Lagrangians and observables  must belong to $H^{n, *}(s)$.\footnote{If suitable boundary conditions are imposed on the fields of the theory or if the spacetime manifold is boundaryless,  the requirement of quasi-invariance of the Lagrangian, $sL=d\alpha$, is enough to ensure the invariance of the action, $\S=\int L$. So that one may consider $H^{r,g}(s|d)$, the $s$-modulo-$d$-cohomology instead of the strict $s$-cohomology.} For example, given a properly gauge invariant Yang-Mills Lagrangian $L$, $sL=0$, the prototypical Faddeev-Popov gauge-fixed Lagrangian is $L'=L+sb$, where $b$ is of degree $(n, -1)$ (since it involves an antighost, not treated here), and both belong to the same $s$-cohomology class in $H^{n, 0}(s)$. Wess-Zumino consistent gauge anomalies $\mathsf A$ - quantum gauge symmetry breaking of the quantum action $W=e^{iS}$, $sW=\mathsf A$ - belong to $H^{n,1}(s)$.

\section{Reduction of gauge symmetries: the dressing field method}   
\label{Reduction of gauge symmetries: the dressing field method}   

As insightful as the gauge principle is, gauge theories suffer from \emph{prima facie} problems such as an ill-defined quantization procedure due to the divergence of their path integral, and the masslessness of the interaction mediating fields (at odds with the phenomenology of the weak interaction). These drawbacks are rooted in the very thing that is the prime appeal of gauge theories: the gauge symmetry. Hence the necessity to come-up with strategies to reduce it. Broadly, two standard strategies to do so, addressing either problems respectively, are gauge fixings and SSB mechanisms. 
Furthermore, similarly to what happens in General Relativity, it may not be straightforward to extract physical observables in gauge theories. In GR, observables must be diffeomorphism-invariant. In gauge theories, observables must be gauge-invariant, e.g the abelian (Maxwell-Faraday) field strength or Wilson loops.

The dressing field approach is a third way, besides gauge fixing and SSB, to systematically reduce gauge symmetries. As such it may dispense to fix a gauge, can be a substitute to SSB (see \cite{Masson-Wallet,GaugeInvCompFields, Francois2014}) and provides candidate physical observables. 

\subsection{Composite fields}  
\label{Composite fields}  

Let $\P(\M, H)$ be a principal bundle equipped with a connection $\omega$ with curvature $\Omega$, and let $\vphi$ be a $\rho$-equivariant map on $\P$ to be considered as a section of the associated vector bundle $E=\P \times_H V$.  The gauge group is $\H \simeq \Aut_v(\P)$. The main content of the dressing field approach as a gauge symmetry reduction scheme is in the following

\begin{prop} 
\label{P1}
 If $K$ and $G$ are subgroups of $H$ such that $K\subseteq G \subset H$. Note $\K\subset \H$  the gauge subgroup associated with $K$. Suppose there exists a map
\begin{align} 
\label{DF}
u:\P \rarrow G \quad \text{ defined by its $K$-equivariance property }\quad  R_k^*u=k\-u,
\end{align}
  This map $u$, that we will call a \emph{dressing field}, allows to construct through $f: \P \rarrow \P$ given by $f(p)=pu(p)$, the following \emph{composite fields}
 \begin{align}
\label{CompFields}
\omega^u:&=f^*\omega=u\-\omega u+u\-du, \qquad \Omega^u:=f^*\Omega=u\-\Omega u =d\omega^u+\tfrac{1}{2}[\omega^u, \omega^u],\notag\\[1mm]
\vphi^u:&=f^*\vphi= \rho(u\-)\vphi \qquad \text{and} \qquad D^u\vphi^u:=f^*D\vphi=\rho(u\-)D\vphi=d\vphi^u+\rho_*(\omega^u)\vphi^u. 
\end{align}
 which are $\K$-invariant, $K$-horizontal and thus project on the quotient subbundle $\P/K \subset \P$.
 \end{prop}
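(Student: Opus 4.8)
The plan is to verify the three asserted properties of the composite fields in turn: $\K$-invariance, $K$-horizontality, and descent to $\P/K$. The key technical tool throughout is the defining $K$-equivariance of the dressing field, $R_k^*u=k\-u$ for $k\in K$, which I will first re-express infinitesimally and then exploit to control how the map $f(p)=pu(p)$ interacts with the right $K$-action. Before anything else I would record how $f$ transforms under $R_k$: from $f(pk)=pk\,u(pk)=pk\,(k\-u(p))=p\,u(p)=f(p)$, so $f\circ R_k = f$ for all $k\in K$. This single identity is the engine behind all three claims, and I would want it stated cleanly at the outset.

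First I would establish $\K$-invariance. An element of $\K$ corresponds to a vertical automorphism $\Phi(p)=p\gamma(p)$ with $\gamma:\P\to K$ satisfying $R_k^*\gamma=k\-\gamma k$. I would check that, under such $\gamma$, the dressing field is inert in the relevant sense and that $f$ intertwines the $\K$-action trivially, so that $(\omega^u)^\gamma=\omega^u$, $(\Omega^u)^\gamma=\Omega^u$, $(\vphi^u)^\gamma=\vphi^u$, and likewise for $D^u\vphi^u$. Concretely this amounts to showing $\Phi^*f^* = f^*$ on the gauge fields, which follows because composing the dressing map with a $K$-valued vertical automorphism reproduces $f$ up to the cancellation already visible in $f\circ R_k=f$. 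I expect the bookkeeping here to be the most delicate part: one must carefully track that the Maurer–Cartan term $u\-du$ absorbs exactly the inhomogeneous $\gamma\-d\gamma$ piece coming from the $K$-gauge transformation of $\omega$, and confirm that the transformation law of $u$ under $\K$ (which is dictated purely by its $K$-equivariance, not by any equivariance of $f^*\omega$) produces the required cancellation. This is the one place where a genuine computation, rather than a formal manipulation, is unavoidable.

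Next I would prove $K$-horizontality. A form is $K$-horizontal if it annihilates vertical vectors $X^v$ generated by $X\in\LieK=\text{Lie}K$. I would compute the pushforward $f_*X^v$ and show, using $f\circ R_k=f$ differentiated at $k=\1$, that $f_*X^v=0$ for every $X\in\LieK$. Horizontality of $\omega^u=f^*\omega$, $\Omega^u=f^*\Omega$, and $D^u\vphi^u=f^*D\vphi$ then follows immediately, since $(f^*\alpha)(X^v,\dots)=\alpha(f_*X^v,\dots)=0$; and $\vphi^u=\rho(u\-)\vphi$ is horizontal trivially as a $0$-form. The identity $f_*X^v=0$ is the crisp technical heart of horizontality, and it is a direct differential consequence of the group-level identity already in hand.

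Finally, $\K$-invariance together with $K$-horizontality is precisely the statement that the composite fields are basic with respect to the $K$-action, hence descend to well-defined objects on the quotient bundle $\P/K\subset\P$. I would close by remarking that the algebraic formulas $\Omega^u=d\omega^u+\tfrac12[\omega^u,\omega^u]$ and $D^u\vphi^u=d\vphi^u+\rho_*(\omega^u)\vphi^u$ hold automatically, since pullback by $f$ commutes with $d$ and with the Lie and representation brackets, so the Cartan structure equation and the covariant-derivative formula are preserved verbatim — no separate verification is needed beyond observing that $f^*$ is an algebra homomorphism.
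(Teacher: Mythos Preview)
Your proposal is correct. The paper does not give a full proof of this proposition (it refers to the companion paper \cite{Attard-Francois2016}); the only argument offered in the text is the \textbf{NB} immediately following the statement, which checks $\K$-invariance \emph{algebraically} via the formal identity $(\chi^u)^\gamma=(\chi^\gamma)^{u^\gamma}=(\chi^\gamma)^{\gamma\-u}=\chi^u$, using the gauge-transformation formulas \eqref{ActiveGT} together with $u^\gamma=\gamma\-u$.

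Your organizing principle is genuinely different and more geometric: everything is driven by the single identity $f\circ R_k=f$. From it you get $\K$-invariance (pointwise $f\circ\Phi=f$ for $\Phi(p)=p\gamma(p)$ with $\gamma\in\K$, hence $\Phi^*f^*=f^*$), $K$-horizontality (differentiating gives $f_*X^v=0$ for $X\in\LieK$), and then descent to $\P/K$ by basic-form criteria. This packages all three conclusions as consequences of one fact, whereas the paper's sketch treats $\K$-invariance as a standalone algebraic cancellation and leaves horizontality and descent to the companion paper. One remark: since you already have $f\circ\Phi=f$, the $\K$-invariance step is \emph{immediate} via $\Phi^*f^*=f^*$ --- the ``delicate bookkeeping'' you anticipate with the Maurer--Cartan term $u\-du$ absorbing $\gamma\-d\gamma$ is exactly the algebraic computation the paper's \textbf{NB} performs, and it is the coordinate shadow of the clean geometric identity you already possess. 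You need not do it twice.
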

 
  \noindent \textbf{NB}:  The dressing field can be \emph{equally defined} by its $\K$-gauge transformation: $u^\gamma=\Phi^*u=\gamma\-u$, with $\gamma \in \K\subset \H$.  
   This together with \eqref{ActiveGT} makes easy to check algebraically that the composite fields \eqref{CompFields} are $\K$-invariant indeed,  according to $(\chi^u)^\gamma=(\chi^\gamma)^{u^\gamma}=(\chi^\gamma)^{\gamma\-u}=\chi^u$.\footnote{We use $\chi=\{\omega, \Omega, \vphi, \ldots\} $ to denote a generic variable when performing an operation that applies equally well to any specific one.} 

Several comments are in order. First, in the event that $G\supset H$ then one has to assume that the $H$-bundle $\P$ is a subbundle of a $G$-bundle, and \emph{mutatis mutandis} the proposition still holds. Such a situation occurs when $\P$ is a reduction of a frame bundle  (of unspecified order) as the main object of this paper will illustrate. 

Second, if $K=H$ then the composited fields \eqref{CompFields} are $\H$-invariant, the gauge symmetry is fully reduced, and they live on $\P/H \simeq \M$. This shows that the existence of a global dressing field is a strong constraint on the topology of the bundle $\P$: a $K$-dressing field means that the bundle is trivial along the $K$-subgroup, $\P \simeq \P/K \times K$, while a $H$-dressing field means its triviality, $\P\simeq \M \times H$. 

Notice that despite the formal similarity with \eqref{ActiveGT} (or \eqref{PassiveGT}), the composite fields \eqref{CompFields} are not gauge transformed fields. Indeed the defining equivariance property \eqref{DF} of the dressing field implies $u\notin \H$, and $f\notin \Aut_v(\P)$. As a consequence, in general the composite fields do not belong to the gauge orbits of the original fields: $\chi^u \notin \mathcal{O}(\chi)$. The dressing field method then shouldn't be mistaken for a mere gauge fixing. 
\medskip

\subsection{Residual gauge symmetry}  
\label{Residual gauge symmetry}  

 Since in general $H/K$ is a coset, its action on the dressing field $u$ is left unspecified and depends on specifics of the situation at hand. Then in general nothing can be said of the transformation properties of the composite fields under $H/K$. But  interesting things happen if $K$ is a normal subgroup, $K\mathrel{\unlhd} H$, so that $H/K$ is a group that we note $J$ for convenience. The quotient bundle $\P/K$ is then a $J$-principal bundle noted $\P'=\P'(\M, J)$. 
We discuss two most important such cases in the following  subsections.
 
 \subsubsection{The composite fields as genuine gauge fields}  
 \label{The composite fields as genuine gauge fields}  

\begin{prop}  
\label{Prop2}
Let $u$ be a $K$-dressing field on $\P$. Suppose its $J$-equivariance  is given by 
\begin{align}
\label{CompCond}
R^*_ju=\Ad_{j\-}u, \qquad \text{ with } j \in J. 
\end{align}
Then the dressed connection $\omega^u$ is  a $J$-principal connection on $\P'$. That is, for $ X\in \LieJ$ and  $j\in J$, $\omega^u$ satisfies: $\omega^u(X^v)=X$ and $R^*_j\omega^u=\Ad_{j\-}\omega^u$. Its curvature is given by $\Omega^u$. 

\noindent Also, $\vphi^u$ is a $(\rho, V)$-tensorial map on $\P'$ and can be seen as a section of the associated bundle $E'=\P' \times_{J} V$. The covariant derivative on such sections is given by $D^u=d + \rho(\omega^u)$. 
\end{prop}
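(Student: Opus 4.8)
The plan is to lean entirely on Proposition \ref{P1} together with the extra hypothesis \eqref{CompCond}, reducing the statement to the verification of the two defining axioms of a principal connection. Proposition \ref{P1} already guarantees that $\omega^u$, $\Omega^u$ and $\vphi^u$ are $K$-horizontal and $\K$-invariant, hence descend to well-defined objects on the quotient $J$-bundle $\P'=\P/K$. It therefore remains to check that, once living on $\P'$, the form $\omega^u$ is equivariant and correctly normalized with respect to the residual group $J$, that $\Omega^u$ is its curvature, and that $\vphi^u$ is $(\rho,V)$-tensorial for $J$.

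I would first dispatch the equivariance axiom by a direct manipulation of $\omega^u=u\-\omega u+u\-du$. Using that $\omega$ is a connection on $\P$, so $R_j^*\omega=\Ad_{j\-}\omega$, together with the hypothesis \eqref{CompCond} $R_j^*u=\Ad_{j\-}u=j\-uj$ and $R_j^*du=d(R_j^*u)=\Ad_{j\-}du$ (as $j$ is constant), the conjugating factors telescope and yield $R_j^*\omega^u=j\-(u\-\omega u+u\-du)j=\Ad_{j\-}\omega^u$. For the normalization axiom I would evaluate $\omega^u$ on the fundamental vertical field of $X\in\LieJ$; lifting $X$ to a representative $\tilde X\in\LieH$, whose vertical field $\tilde X^v$ on $\P$ projects onto $X^v$ on $\P'$, the decisive input is the vertical derivative of the dressing field, obtained by differentiating \eqref{CompCond} along $\exp(t\tilde X)$:
\[
du(\tilde X^v)=\tfrac{d}{dt}\big|_0\,\Ad_{\exp(-t\tilde X)}u=-\tilde Xu+u\tilde X.
\]
Together with $\omega(\tilde X^v)=\tilde X$ this gives $\omega^u(\tilde X^v)=u\-\tilde Xu+u\-(-\tilde Xu+u\tilde X)=\tilde X$, which descends to $\omega^u(X^v)=X$. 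As a consistency check, rerunning the same computation for $X\in\LieK$ with the $K$-equivariance $R_k^*u=k\-u$ of \eqref{DF} returns $\omega^u(X^v)=0$, recovering the horizontality of Proposition \ref{P1}.

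The curvature claim is then immediate: the identity $\Omega^u=d\omega^u+\tfrac12[\omega^u,\omega^u]$ was already recorded in \eqref{CompFields}, and $R_j^*\Omega^u=\Ad_{j\-}\Omega^u$ follows from $R_j^*\Omega=\Ad_{j\-}\Omega$ exactly as for $\omega^u$, so $\Omega^u$ is the $(\Ad,\LieJ)$-tensorial curvature $2$-form of $\omega^u$ on $\P'$. For $\vphi^u$, horizontality is again supplied by Proposition \ref{P1}, while $J$-equivariance is a one-line check, $R_j^*\vphi^u=\rho\big((R_j^*u)\-\big)R_j^*\vphi=\rho(j\-u\-j)\rho(j\-)\vphi=\rho(j\-)\rho(u\-)\vphi=\rho(j\-)\vphi^u$, so $\vphi^u$ is $(\rho,V)$-tensorial on $\P'$ and represents a section of $E'=\P'\times_J V$; the formula $D^u=d+\rho_*(\omega^u)$ is read off directly from \eqref{CompFields}.

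I expect the only genuine subtlety to lie in the normalization axiom. The point is twofold: one must take the vertical derivative of $u$ with respect to \eqref{CompCond} rather than the $K$-equivariance \eqref{DF}, and one must check that the inhomogeneous Maurer--Cartan term $u\-du$ contributes precisely the correction $\tilde X-u\-\tilde Xu$ needed so that $\omega^u(\tilde X^v)$ returns $\tilde X$ rather than $\Ad_{u\-}\tilde X$. The remaining care is bookkeeping --- relating the fundamental field of $X\in\LieJ$ to a representative in $\LieH$ via the projection $\P\rarrow\P'$, and reading the value of $\omega^u$ in $\LieJ$ through the identification $\LieJ\simeq\LieH/\LieK$ afforded by $K\mathrel{\unlhd}H$ --- all of which rests on Proposition \ref{P1} and the hypothesis \eqref{CompCond}.
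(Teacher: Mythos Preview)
Your proof is correct and follows the natural direct verification of the two connection axioms plus tensoriality. The paper itself does not give a proof of this proposition in the present text: it states that ``proofs of which can be found in the corresponding section of \cite{Attard-Francois2016}'', the companion tractor paper. Your approach --- pulling back $\omega^u=u\-\omega u+u\-du$ by $R_j$ using the $H$-equivariance of $\omega$ together with \eqref{CompCond}, and then differentiating \eqref{CompCond} along $\exp(t\tilde X)$ to get the crucial identity $du(\tilde X^v)=-\tilde Xu+u\tilde X$ that makes the normalization axiom come out --- is the expected direct argument and almost certainly what the companion paper does.

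One small point worth tightening: in the equivariance paragraph you silently treat $j\in J$ as an element of $H$ when writing $R_j^*\omega=\Ad_{j\-}\omega$, whereas in the normalization paragraph you are careful to lift $X\in\LieJ$ to $\tilde X\in\LieH$. For consistency you should do the same lift at the group level (choose a representative $h\in H$ of $j\in J=H/K$) and then observe that the result is independent of the representative thanks to the $K$-basicness from Proposition~\ref{P1}; in the paper's applications $H=K_0K_1$ with $K_1\mathrel{\unlhd}H$, so $J\simeq K_0\subset H$ and the lift is canonical, which is why the paper's notation elides this. Your final paragraph already contains the right observation about reading $\omega^u$ as $\LieJ$-valued via $\LieJ\simeq\LieH/\LieK$.
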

From this we immediately deduce the following

\begin{cor}
\label{Cor3}
The transformation of the composite fields under the residual $\J$-gauge symmetry is found in the usual way to be
\begin{align}
\label{GTCompFields}
(\omega^u)^{\gamma'}:&={\Phi'}^*\omega^u={\gamma'}\- \omega^u \gamma' + {\gamma'}\-d\gamma',  \qquad (\Omega^u)^{\gamma'}:={\Phi'}^*\Omega^u={\gamma'}\- \Omega^u
 \gamma', \notag\\
 (\vphi^u)^{\gamma'}:&={\Phi'}^*\vphi^u=\rho({\gamma'}\- )\vphi^u, \qquad \text{ and } \qquad  (D^u\vphi^u)^{\gamma'}:={\Phi'}^*D^u\vphi^u=\rho({\gamma'}\-) D^u\vphi^u,
\end{align}
with $\Phi' \in \Aut(\P') \simeq \J \ni \gamma'$. 
\end{cor}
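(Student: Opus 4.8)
The plan is to leverage Proposition \ref{Prop2} to the fullest. Once that proposition has established that, on the quotient bundle $\P'=\P/K$, the dressed objects $\omega^u$, $\Omega^u$, $\vphi^u$ and $D^u\vphi^u$ are respectively a bona fide $J$-principal connection, its curvature, a $(\rho, V)$-tensorial $0$-form, and its covariant derivative, there is nothing left to compute from scratch. The residual $\J$-gauge symmetry is simply the gauge group of the honest principal bundle $\P'(\M, J)$, and genuine gauge fields on such a bundle obey, by the general theory of Section \ref{Basic differential geometry}, the transformation laws \eqref{ActiveGT}. Thus the corollary is nothing but the instantiation of those laws on $\P'$.

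Concretely, I would first recall the isomorphism $\J\simeq\Aut_v(\P')$ of Section \ref{Basic differential geometry}, now read on $\P'$: to each $\gamma'\in\J$ there corresponds a vertical automorphism $\Phi'\in\Aut_v(\P')$ via $\Phi'(p')=p'\gamma'(p')$, and the residual gauge transformation of any dressed field is by definition its pullback along $\Phi'$. Then I would substitute $(\omega^u, \Omega^u, \vphi^u, D^u\vphi^u, \gamma', \Phi')$ for $(\omega, \Omega, \vphi, D\vphi, \gamma, \Phi)$ in \eqref{ActiveGT}. Because Proposition \ref{Prop2} guarantees that each of $\omega^u$, $\Omega^u$, $\vphi^u$ is precisely the kind of object — principal connection, curvature, tensorial map — to which the matching line of \eqref{ActiveGT} applies, the four identities \eqref{GTCompFields} follow verbatim, including the compatible behaviour $(D^u\vphi^u)^{\gamma'}=\rho({\gamma'}\-)D^u\vphi^u$ of the covariant derivative.

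I expect no genuine obstacle here; this is exactly why the preceding subsection is worded so as to stress that the composite fields are \emph{genuine} gauge fields on $\P'$. The only point deserving a line of care is to confirm that the correspondence $\J\simeq\Aut_v(\P')$ and the pullback definition of a gauge transformation transpose without modification from $\P$ to $\P'$ — which they do, since $\P'(\M, J)$ is an ordinary principal bundle and nothing in Section \ref{Basic differential geometry} used any feature specific to $\P$. Should one insist on a self-contained verification bypassing Proposition \ref{Prop2}, one could instead start directly from the $\Ad$-equivariance \eqref{CompCond} of $u$ together with \eqref{ActiveGT}; but this route would merely reprove the content of Proposition \ref{Prop2} and is therefore redundant, which justifies the phrase ``found in the usual way'' in the statement.
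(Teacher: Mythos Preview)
Your proposal is correct and matches the paper's approach exactly: the corollary is introduced with ``From this we immediately deduce the following'' and the phrase ``found in the usual way'' in the statement signals precisely the instantiation of \eqref{ActiveGT} on $\P'(\M, J)$ once Proposition~\ref{Prop2} has certified $\omega^u$, $\Omega^u$, $\vphi^u$ as genuine $J$-gauge fields. The paper offers no further argument beyond this, so your unpacking of what ``the usual way'' means is faithful to the intended (essentially tautological) deduction.
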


\noindent \textbf{NB}: The relation $u^{\gamma'}={\gamma'}\- u \gamma'$ can be taken as an alternative to \eqref{CompCond} as a  condition on the dressing field $u$.

\paragraph{Further dressing operations} In the case where \eqref{CompCond} holds so that the composite fields \eqref{CompFields} are $\K$-invariant but genuine $\J$-gauge fields with residual gauge transformation given by \eqref{GTCompFields}, the question stands as to the possibility to perform a further dressing operation. 

Suppose a second dressing field $u'$ for the residual symmetry is available. It would be defined by ${u'}^{\gamma'}={\gamma'}\-u'$ for $\gamma'\in \J$. But in order to not spoil the $\K$-invariance obtained from the first dressing field $u$, the second dressing field should satisfy the \emph{compatibility condition} 
\begin{align}
\label{CompCond12}
R^*_ku'=u', \quad \text{ for } k\in K. \quad \text{ Or altenatively: } \quad {u'}^\gamma=u',\quad \text{ for } \gamma \in \K.
\end{align}
In this case indeed: 
\begin{align*}
\left(\chi^{uu'}\right)^\gamma&=\left(\chi^\gamma\right)^{u^\gamma {u'}^\gamma}=\left(\chi^\gamma\right)^{\gamma\-u u'}=\chi^{uu'}, \quad \gamma\in \K.\\
\left(\chi^{uu'}\right)^{\gamma'}&=\left(\chi^{\gamma'}\right)^{u^{\gamma'} {u'}^{\gamma'}}=\left(\chi^{\gamma'}\right)^{{\gamma'}\-u\gamma' \ {\gamma'}\-u'}=\chi^{uu'}, \quad \gamma' \in \J .
\end{align*}
We see that the defining properties of the dressing fields $u$ and $u'$, together with their compatibility conditions \eqref{CompCond} and \eqref{CompCond2} implies that $uu'$ can be treated as a single dressing for $\H$: 
\begin{align*}
\left(uu'\right)^{\gamma\gamma'}=\left(\left(uu'\right)^\gamma\right)^{\gamma'}=(\gamma\- uu')^{\gamma'}=\left(\gamma^{\gamma'}\right)\-\ {\gamma'}\-u\gamma'\ {\gamma'}\-u'={\gamma'}\-\gamma\- uu'= (\gamma\gamma')\- uu'.
\end{align*}
The extension of this scheme to any number of dressing field is straightforward, the details can be found in \cite{Francois2014}.

 \subsubsection{The composite fields as a new kind of gauge fields}  
 \label{The composite fields as  a new kind of gauge fields}  
 
 Before turning to this next case we need to introduce some definitions. Let $G' \supset G$ be a Lie group for which representations $(\rho, V)$ of $G$ are also representations of $G'$.
 Consider a $C^\infty$-map $C :P \times J \rarrow G'$, $(p, j) \mapsto C_p(j)$, satisfying
\begin{align}
\label{PropDefC}
C_p(jj')=C_p(j)C_{pj}(j').
\end{align} 
From this we have that $C_p(e)=e$,  $e$ the identity in both $J$ and $G'$, and $C_p(j)\-=C_{pj}(j\-)$. The differential of $C$ is 
\begin{align*}
dC_{(p, j)}=dC(j)_p + dC_{p| j}: T_p\P \oplus T_jJ \rarrow T_{C_p(j)}G',
\end{align*}
where $\ker dC(j)=T_jJ$ and $\ker dC_p=T_p\P$, with by definition
\begin{align*}
dC(j)_p(X_p)&:=\tfrac{d}{dt}C_{\phi_t}(j) |_{t=0}, \qquad \phi_t \text{ the flow of } X\in T\P \text{ and } \phi_{t=0}=p,\\
dC_{p|j}(Y_j)&:=\tfrac{d}{dt} C_p(\vphi_t)|_{t=0},\qquad \vphi_t \text{ the flow of } Y\in TJ \text{ and } \vphi_{t=0}=j.
\end{align*}
Notice that $C_p(j)\-dC_{(p,j)} : T_p\P \oplus T_jJ \rarrow  T_eG'=\LieG'$.
We are now ready to state our next result as the following
\begin{prop}
\label{P4}
Let $u$ be a $K$-dressing field on $\P$. Suppose its $J$-equivariance is given by
\begin{align}
\label{CompCond2}
(R_j^*u)(p)=j\-u(p)C_p(j),\quad \text{with } \quad j\in J \text{ and  $C$ a map as above}. 
\end{align}
Then $\omega^u$ satisfies
\begin{enumerate}
\item $\omega^u_p(X^v_p)=c_p(X):=\tfrac{d}{dt}C_p(e^{tX})|_{t=0}$, $\quad$ for $X\in \LieJ$ and $X^v_p \in V_p\P'$.
\item $R^*_j\omega^u=C(j)\- \omega^u C(j)+ C(j)\-dC(j)$.
\end{enumerate}
So, $\omega^u$ is  a kind of generalized connection $1$-form. Its curvature $\Omega^u$ is $J$-horizontal and satisfies $R^*_j\Omega^u=C(j)\-\Omega^uC(j)$. 
Also, $\vphi^u$ is a $\rho(C)$-equivariant map, $R^*_j \vphi^u=\rho\left(C(j)\right)\- \vphi^u$. The first order differential operator $D^u:=d + \rho_*(\omega^u)$ is a natural covariant derivative on such $\vphi^u$ so that $D^u\vphi^u$ is a $(\rho(C), V)$-tensorial form: 
$R^*_jD^u\vphi^u=\rho\left(C(j)\right)\- D^u\vphi^u$ and $(D^u\vphi^u)_p(X_p^v)=0$. 
\end{prop}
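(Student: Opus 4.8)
The plan is to derive all four statements from a single twisted equivariance property of the map $f$. From $f(p)=pu(p)$ and the hypothesis \eqref{CompCond2}, I would first compute, for fixed $j\in J$,
\begin{align*}
f(pj)=pj\,(R_j^*u)(p)=pj\,j\-u(p)C_p(j)=p\,u(p)\,C_p(j)=f(p)\,C_p(j).
\end{align*}
Thus $(f\circ R_j)(p)=f(p)\,C_p(j)$, i.e. $f$ intertwines the right $J$-action on $\P'$ with pointwise right translation by the $G'$-valued function $C(j):p\mapsto C_p(j)$. This twisted equivariance is the engine of the proof: every claim is obtained by pulling the relevant field back along it.

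Second, I would record the general ``gauge-transformation'' lemma already underlying \eqref{CompFields}: for any map $\Psi$ into the ambient bundle and any $G'$-valued function $g$, the composite $\Phi=\Psi g$ satisfies $\Phi^*\omega=g\-(\Psi^*\omega)g+g\-dg$, $\Phi^*\Omega=g\-(\Psi^*\Omega)g$, and $\Phi^*\beta=\rho(g\-)\Psi^*\beta$ for any $(\rho,V)$-tensorial form $\beta$. The proof is the same Leibniz-rule splitting of $\Phi_*Y$ into a horizontal push-forward and a vertical part as for Proposition \ref{P1}. Applying this with $\Psi=f$, $g=C(j)$, and using $R_j^*\chi^u=(f\circ R_j)^*\chi=\Phi^*\chi$, immediately yields claim 2, the curvature law $R_j^*\Omega^u=C(j)\-\Omega^uC(j)$, the $\rho(C)$-equivariance $R_j^*\vphi^u=\rho(C(j))\-\vphi^u$ (here $\vphi$ is an equivariant $0$-form, so the inhomogeneous term is absent), and likewise $R_j^*D^u\vphi^u=\rho(C(j))\-D^u\vphi^u$ since $D\vphi$ is $(\rho,V)$-tensorial.

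Third, for the vertical and horizontal statements I would differentiate the twisted equivariance at $j=e$. Writing $X^v_p=\tfrac{d}{dt}(p\,e^{tX})|_0$ for $X\in\LieJ$ and differentiating $f(p\,e^{tX})=f(p)\,C_p(e^{tX})$ gives $f_*X^v_p=(c_p(X))^v_{f(p)}$, the fundamental vector field on the ambient $G'$-bundle generated by $c_p(X)=\tfrac{d}{dt}C_p(e^{tX})|_0\in\LieG'$ (using $C_p(e)=e$). Claim 1 then follows from $\omega(Y^v)=Y$: $\omega^u_p(X^v_p)=\omega_{f(p)}(f_*X^v_p)=\omega_{f(p)}((c_p(X))^v)=c_p(X)$. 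The horizontality of $\Omega^u$ and of $D^u\vphi^u$ is immediate from the same identity, since $\Omega$ and $D\vphi$ are horizontal on the ambient bundle and $f_*X^v_p$ is vertical there: $\Omega^u_p(X^v_p,\cdot)=\Omega_{f(p)}((c_p(X))^v,\cdot)=0$, and likewise $(D^u\vphi^u)_p(X^v_p)=0$.

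The only genuine obstacle is the legitimacy of the translation $f(p)C_p(j)$ and of the pullbacks above, since $C_p(j)\in G'$ while $\P$ is merely an $H$-bundle. As in the remark following Proposition \ref{P1}, I would resolve this by regarding $\P$ as a subbundle of an ambient $G'$-principal bundle on which $\omega$ extends to a connection with curvature $\Omega$, $\vphi$ extends to a $\rho$-equivariant map, and $D\vphi$ to a $(\rho,V)$-tensorial form; the hypothesis that $(\rho,V)$ is simultaneously a $G$- and a $G'$-representation is exactly what makes $\rho(C_p(j))$ meaningful and the $G'$-equivariance of $\vphi$ and $D\vphi$ valid. Once this ambient setting is fixed, one must note that $g=C(j)$ is genuinely $p$-dependent, so the Maurer-Cartan term reads $g\-dg=C(j)\-dC(j)$ with $dC(j)$ the partial differential $dC(j)_p$ introduced after \eqref{PropDefC}; this is the only point where the cocycle identity \eqref{PropDefC} and the notation for $dC$ enter, and matching the bookkeeping to the stated formulas is then routine.
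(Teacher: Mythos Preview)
The paper does not actually prove Proposition \ref{P4}: as announced in the introduction to Section \ref{Reduction of gauge symmetries: the dressing field method}, proofs of the general propositions in that section are deferred to the companion paper \cite{Attard-Francois2016}. So there is no in-paper argument to compare against, and your task reduces to producing a correct self-contained proof.

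Your argument is correct and is organized around exactly the right structural observation: the twisted equivariance $f\circ R_j = R_{C(j)}\circ f$ (with $C(j)$ $p$-dependent) is the single identity from which all four claims follow by pullback. The derivation of claim 2 and of the equivariance laws for $\Omega^u$, $\vphi^u$, $D^u\vphi^u$ via the standard ``$\Phi=\Psi g$'' lemma is sound; the crucial point, which you handle correctly, is that $g=C(j)$ depends on $p$, so the inhomogeneous term is $C(j)\-dC(j)$ with $dC(j)$ the partial differential $dC(j)_p$ in the paper's notation, and that for tensorial forms the vertical contribution from $dg$ drops out by horizontality. Your derivation of claim 1 by differentiating $f(p\,e^{tX})=f(p)\,C_p(e^{tX})$ at $t=0$ and using $C_p(e)=e$ is clean, and the horizontality of $\Omega^u$ and $D^u\vphi^u$ on $J$-vertical vectors follows immediately since $f_*$ sends them to vertical vectors in the ambient bundle.

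You also identify the one genuine technical caveat: since $C_p(j)\in G'$ with $G'\supset G$, the expression $f(p)C_p(j)$ and the pullbacks make literal sense only after embedding $\P$ in an ambient $G'$-principal bundle to which $\omega$, $\Omega$, $\vphi$, $D\vphi$ extend with the appropriate equivariance. This is exactly the situation flagged in the remark following Proposition \ref{P1}, and invoking it here is legitimate. Nothing is missing.
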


From this we can find the transformations of the composite fields under the residual gauge group $\J \simeq \Aut_v(\P')$. But first, we again need some preliminary results. Consider $\Phi' \in \Aut_v(\P')\simeq  \gamma' \in \J$, the residual gauge transformation of the dressing field is
\begin{align}
\label{GT-CDressField}
\left(u^{\gamma'} \right)(p):=(\Phi'^*u)(p)&=u(p\gamma'(p))=\gamma'(p)\- u(p) C_p\left(\gamma'(p)\right)=\left( {\gamma'}\- u C(\gamma') \right) (p).
\end{align}
\textbf{NB}: This relation can be taken as an alternative to \eqref{CompCond2} as a  condition on the dressing field $u$.\\
We witness the introduction of the map $C(\gamma'):\P' \rarrow G'$, $p \mapsto C_p\left(\gamma'(p)\right)$. It is given by the composition serie
\begin{align*}
&\P' \xrightarrow{\Delta} \P' \times \P' \xrightarrow{\id \times \gamma'} \P' \times J \xrightarrow{\phantom{b}C\phantom{b}} G', \\
& p\xmapsto{\phantom {bbbb}} (p, p) \xmapsto{\phantom{bbbb}}  \left(p, \gamma'(p)\right) \xmapsto{\phantom{bb}} C_p\left(\gamma'(p)\right).
\end{align*}
Its differential  $dC(\gamma') : T_p\P' \rarrow T_{C_p\left(\gamma'(p)\right)}G'$ is given by $dC(\gamma')=dC \circ \left( \id \oplus d\gamma' \right) \circ d\Delta$. 
Notice that we have $C_p\left(\gamma'(p) \right)\-dC(\gamma')_p : T_p\P' \rarrow T_eG'=\LieG'$. 
Then, we have the following

\begin{prop} 
\label{P5}
Given  $\Phi' \in \Aut_v(\P')\simeq  \gamma' \in \J$ and a dressing field $u$ satisfying \eqref{CompCond2} or \eqref{GT-CDressField}, the residual gauge transformations of the composite fields are
\begin{align}
\label{GTCompFields2}
(\omega^u)^{\gamma'}:&= \Phi'^*\omega^u = C(\gamma')\- \omega^u C(\gamma') + C(\gamma')\-dC(\gamma'), \qquad (\Omega^u)^{\gamma'}:=\Phi'^*\Omega^u= C(\gamma') \- \Omega^u C(\gamma'), \notag \\
(\vphi^u)^{\gamma'}:&= \Phi'^*\vphi^u = \rho\left( C(\gamma') \- \right) \vphi^u \qquad \text{ and } \qquad (D^u\vphi^u)^{\gamma'}= \Phi'^* D^u\vphi^u = \rho\left( C(\gamma') \- \right) D^u\vphi^u.
\end{align}
So, the composite fields \eqref{CompFields} behave as gauge fields of a new kind, and implement the \emph{gauge principle} - or principle of \emph{local symmetry} - of field theory in Physics. 
\end{prop}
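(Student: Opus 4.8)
The plan is to derive \eqref{GTCompFields2} by the same ``recomposition'' device used above for $\K$-transformations (cf.\ the NB following Proposition \ref{P1}), the identity $(\chi^u)^{\gamma'}=(\chi^{\gamma'})^{u^{\gamma'}}$, now fed with the residual group $\J$. Two inputs suffice: the bare fields $\omega,\Omega,\vphi,D\vphi$ transform under ${\Phi'}^*$ by the usual formulas \eqref{ActiveGT} read for $\gamma'\in\J$, and the dressing field transforms by the $C$-twisted rule \eqref{GT-CDressField}, $u^{\gamma'}={\gamma'}\- u\,C(\gamma')$. Substituting these into the right-hand side of the recomposition identity and simplifying reduces each of the four assertions to a short computation; throughout, all group elements are multiplied inside $G'$, where $\rho$ is defined by hypothesis.

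The only nontrivial case is the connection. Writing $C:=C(\gamma')$ and using that dressing $\omega^{\gamma'}$ by $u^{\gamma'}$ reproduces the structure of \eqref{CompFields},
\begin{align*}
(\omega^u)^{\gamma'}=(u^{\gamma'})\-\,\omega^{\gamma'}\,u^{\gamma'}+(u^{\gamma'})\-\,d(u^{\gamma'}),
\end{align*}
I would insert $\omega^{\gamma'}={\gamma'}\-\omega\gamma'+{\gamma'}\-d\gamma'$ and $u^{\gamma'}={\gamma'}\- u\,C$, expanding the derivative by the Leibniz rule with $d({\gamma'}\-)=-{\gamma'}\-d\gamma'\,{\gamma'}\-$. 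The conjugation term yields $C\-u\-\omega u\,C$ plus an inhomogeneous piece $C\-u\-(d\gamma'\,{\gamma'}\-)u\,C$, while $(u^{\gamma'})\-d(u^{\gamma'})$ yields $C\-u\-du\,C+C\-dC$ together with exactly the opposite inhomogeneous piece $-C\-u\-(d\gamma'\,{\gamma'}\-)u\,C$ produced by differentiating ${\gamma'}\-$. The crucial point is that these two $d\gamma'$ terms cancel, leaving $C\-\big(u\-\omega u+u\-du\big)C+C\-dC=C\-\omega^u C+C\-dC$, the asserted transformation. This is the very cancellation that makes an ordinary gauge potential transform inhomogeneously, and it certifies $\omega^u$ as a generalized connection consistent with the equivariance established in Proposition \ref{P4}.

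The remaining three identities are homogeneous and need no cancellation. Since $\Omega^{\gamma'}={\gamma'}\-\Omega\gamma'$ has no inhomogeneous term, $(\Omega^u)^{\gamma'}=(u^{\gamma'})\-\Omega^{\gamma'}u^{\gamma'}=C\-u\-\Omega u\,C=C\-\Omega^u C$. For the matter sector I would insert $\vphi^{\gamma'}=\rho({\gamma'}\-)\vphi$, $(D\vphi)^{\gamma'}=\rho({\gamma'}\-)D\vphi$ and $\rho\big((u^{\gamma'})\-\big)=\rho(C\-)\rho(u\-)\rho(\gamma')$; the product $\rho(\gamma')\rho({\gamma'}\-)$ collapses to the identity, leaving $\rho\big(C(\gamma')\-\big)\vphi^u$ and likewise $\rho\big(C(\gamma')\-\big)D^u\vphi^u$.

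The step I expect to demand the most care is not the algebra but the meaning of the term $C(\gamma')\-dC(\gamma')$ in the connection formula: one must read $C(\gamma')$ as the genuine $G'$-valued function on $\P'$ built by the composition series preceding the statement, whose differential is $dC(\gamma')=dC\circ(\id\oplus d\gamma')\circ d\Delta$, so that $C(\gamma')\-dC(\gamma')$ is an honest $\LieG'$-valued $1$-form; only then is the inhomogeneous term of the expected generalized-connection shape. A secondary point is the bookkeeping of the multiplications $\gamma'\in\J$, $u\in G$, $C(\gamma')\in G'$ and the well-definedness of $\rho$ on all factors, which is guaranteed precisely by the standing assumption $G'\supset G$ with $\rho$ extending to $G'$, introduced before Proposition \ref{P4}.
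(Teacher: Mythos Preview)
Your proof is correct and follows exactly the algebraic ``recomposition'' approach the paper itself advertises in the NB after Proposition~\ref{P1}, namely computing $(\chi^u)^{\gamma'}=(\chi^{\gamma'})^{u^{\gamma'}}$ with the inputs \eqref{ActiveGT} and \eqref{GT-CDressField}; the paper defers the actual proof to the companion article \cite{Attard-Francois2016}, but your derivation is the intended one. Your remarks on the cancellation of the $d\gamma'$ terms and on reading $C(\gamma')\-dC(\gamma')$ via the composition series are both apt and match the paper's surrounding discussion.
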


\noindent \textbf{NB}: Under a further gauge transformation $\Psi \in \Aut_v(\P') \simeq \eta\in \J$, the dressing field behaves as
\begin{align*}
\left(\Psi^*(\Phi^*u)\right)(p)&= \left( (\Phi \circ \Psi)^*u \right)(p)=u\left(\Phi(p\eta(p)\right)=u\left(\Phi(p)\eta(p)\right)=u\left(p\gamma(p)\eta(p)\right)= \eta(p)\-\gamma(p)\- u(p)C_p\left(\gamma(p)\eta(p)\right),\\
							&=\left(\eta\-\gamma\-\ u\ C\left(\gamma\eta \right)\right)(p).\\
\text{or }  \left(\Psi^*(\Phi^*u)\right)(p)&= \left( \gamma\- u C(\gamma) \right) (\Psi(p))=\gamma\left(p\eta(p) \right)\- u\left(p\eta(p)\right) C_{p\eta(p)}\left(\gamma(p\eta(p) \right),\\
							&= \eta(p)\-\gamma(p)\-\eta(p) \cdot \eta(p)\-u(p) C_p\left( \eta(p)\right) \cdot C_{p\eta(p)}\left(\eta(p)\- \gamma(p) \eta(p) \right),\\
							&=\eta(p)\-\gamma(p)\-\ u(p) C_p\left( \eta(p)\right) \cdot C_{p\eta(p)}\left(\eta(p)\- \right) C_p\left(\gamma(p) \eta(p) \right) = \eta(p)\-\gamma(p)\- u(p)C_p\left(\gamma(p)\eta(p)\right).
\end{align*}
This secures the fact that the action  \eqref{GTCompFields2} of the residual gauge symmetry on the composites fields is well-behaved as a representation.

\paragraph{The case of  $1$-$\alpha$-cocycles} Suppose $C_p : J \rarrow G'$ is defined by $C_p(jj')=C_p(j)\ \alpha_j[C_p(j')]$, for $\alpha: J \rarrow \Aut(G')$ a continuous group morphism. Such objects appear in the representation theory of crossed products of $C^*$-algebras and is known as a \emph{$1$-$\alpha$-cocycle} (see \cite{Pedersen79,Williams07}).\footnote{In the general theory the group $G'$ is replaced by a $C^*$-algebra $A$.} Its defining property is an example of \eqref{PropDefC}, and everything that has been said in this section -and will be said in the following - applies when $C_p$ is a $1$-$\alpha$-cocycle.
As a particular case, consider the following

\begin{prop}
\label{alpha_cocycle}
 Suppose $J$ is abelian and let $A_p, B : J\rarrow GL_n$  be group morphisms where $R^*_jA_p(j')=B(j)\- A_p(j') B(j)$. Then $C_p:=A_pB : J\rarrow GL_n$ is a $1$-$\alpha$-cocyle where the morphism $\alpha: J \rarrow \Aut(GL_n)$ is the conjugate action through the morphism $B$: $\alpha_j [g] =B(j)\-[g] B(j)$, with $g \in GL_n$.
 \end{prop}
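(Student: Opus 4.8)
The plan is to verify directly that $C_p = A_p B$ satisfies the defining relation of a $1$-$\alpha$-cocycle, namely $C_p(jj') = C_p(j)\,\alpha_j[C_p(j')]$, for the conjugation morphism $\alpha_j[g] := B(j)\- g\, B(j)$, and then to confirm that this cocycle is genuinely an instance of the general relation \eqref{PropDefC} by identifying $\alpha_j[C_p(j')]$ with $C_{pj}(j')$.

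First I would check that $\alpha$ is well defined as a morphism $J \rarrow \Aut(GL_n)$. For each $j\in J$, conjugation by $B(j)\in GL_n$ is an inner automorphism of $GL_n$, so $\alpha_j \in \Aut(GL_n)$. That $j \mapsto \alpha_j$ respects the group law, $\alpha_{jj'} = \alpha_j \circ \alpha_{j'}$, follows from $B$ being a morphism together with the abelianness of $J$: since $B(jj') = B(j)B(j') = B(j')B(j)$, the values of $B$ mutually commute, and a short computation gives $\alpha_{jj'}[g] = (B(j)B(j'))\- g\, (B(j)B(j')) = \alpha_j\!\left[\alpha_{j'}[g]\right]$.

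Next comes the core computation, the cocycle identity. Expanding the left-hand side with the morphism properties of $A_p$ and $B$ yields $C_p(jj') = A_p(jj')B(jj') = A_p(j)A_p(j')\,B(j)B(j')$. Expanding the right-hand side gives $C_p(j)\,\alpha_j[C_p(j')] = A_p(j)B(j)\cdot B(j)\- A_p(j')B(j')\,B(j) = A_p(j)A_p(j')\,B(j')B(j)$. The two agree precisely because $B(j)B(j') = B(j')B(j)$, which is where the abelianness of $J$ is indispensable; the main (mild) obstacle is simply keeping careful track of the order of the $B$-factors, since the hypothesis enters nowhere else in this step. Note that the bare cocycle relation does not even invoke the equivariance property of $A$.

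Finally I would verify compatibility with \eqref{PropDefC}, i.e.\ that $C_{pj}(j') = \alpha_j[C_p(j')]$, which is what makes the $1$-$\alpha$-cocycle an actual instance of the general framework of Proposition \ref{P4}. Here the equivariance hypothesis $R^*_j A_p(j') = B(j)\- A_p(j') B(j)$ enters: writing $C_{pj}(j') = A_{pj}(j')B(j') = B(j)\- A_p(j')B(j)\,B(j')$ and commuting $B(j)B(j')=B(j')B(j)$ gives $C_{pj}(j') = B(j)\- A_p(j')B(j')\,B(j) = B(j)\- C_p(j')B(j) = \alpha_j[C_p(j')]$. Combined with the cocycle identity this yields $C_p(jj') = C_p(j)C_{pj}(j')$, so $C_p$ indeed satisfies \eqref{PropDefC} and qualifies as a dressing-field transition of the type treated in Proposition \ref{P4}.
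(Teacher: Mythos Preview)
Your proof is correct. The paper itself does not include a proof of this proposition (it refers to the companion paper for proofs of the general propositions in this section), so there is nothing to compare against; your direct verification of the cocycle identity and of the compatibility $C_{pj}(j')=\alpha_j[C_p(j')]$ is exactly the natural argument, and you correctly isolate the two places where abelianness of $J$ is needed (commutation of $B(j)$ with $B(j')$) and the single place where the equivariance hypothesis on $A_p$ enters.
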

As a matter of fact,  in the case of the conformal Cartan geometry and the associated tractors and twistors, $1$-$\alpha$-cocycles of this type do appear, with $J$ is the Weyl group of rescalings.

\subsection{Application to the BRST framework} 
\label{Application to the BRST framework} 

The BRST algebra encodes the infinitesimal gauge symmetry. It is to be expected that the dressing field method modifies it. To see how, let us first consider the following 
\begin{prop}    
Given the BRST algebra \eqref{BRST}-\eqref{BRST2} on the initial gauge variables and the ghost $v\in Lie\H$. The composite fields \eqref{CompFields} satisfy the  \emph{modified BRST algebra}:
\begin{align}
\label{NewBRST}
&s\omega^u=-D^uv^u=-dv^u-[\omega^u, v^u], \quad s\Omega^u=[\Omega^u, v^u], \quad s\vphi^u=-\rho_*(v^u)\vphi^u,\quad \text{and } \quad sv^u=-\tfrac{1}{2}[v^u, v^u]\\[2mm]
 &\text{with the \emph{dressed ghost} } \quad v^u=u\-vu + u\-su.  \notag
\end{align}
This result does not rest on the assumption that $u$ is a dressing field.
\end{prop}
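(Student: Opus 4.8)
The plan is to avoid a term-by-term computation and instead repackage everything through the algebraic connection $\t\omega := \omega + v$ and the total differential $\t d := d + s$, for which $\t d^2 = d^2 + ds + sd + s^2 = 0$ and the russian formula $\t\Omega := \t d\,\t\omega + \tfrac{1}{2}[\t\omega, \t\omega] = \Omega$ hold. The first observation is that dressing the algebraic connection by $u$ reproduces both dressed objects at once: since $\t d u = du + su$,
\begin{align*}
\t\omega^u := u\- \t\omega\, u + u\- \t d u = (u\-\omega u + u\- du) + (u\- v u + u\- su) = \omega^u + v^u,
\end{align*}
so that $v^u$ is precisely the ghost-degree-$1$ part of the dressed algebraic connection. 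This step uses only that $u$ is an invertible $G$-valued $0$-form; its $K$-equivariance plays no role, which is exactly why the statement holds for any such $u$ and not merely for a genuine dressing field.

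The second ingredient is the purely algebraic identity --- valid in any bigraded differential algebra with $\t d^2 = 0$ and invertible $u$, and already used for $d$ alone in Proposition \ref{P1} --- that the curvature built from a dressed connection is the conjugate of the original curvature:
\begin{align*}
\t\Omega^u := \t d\, \t\omega^u + \tfrac{1}{2}[\t\omega^u, \t\omega^u] = u\- \t\Omega\, u = u\-\Omega u = \Omega^u,
\end{align*}
the middle equality being the russian formula. Substituting $\t d = d + s$ and $\t\omega^u = \omega^u + v^u$ into $\t\Omega^u = \Omega^u$ and splitting by bidegree yields three relations: the $(2,0)$-part recovers the structure equation $d\omega^u + \tfrac12[\omega^u,\omega^u] = \Omega^u$ already known from Proposition \ref{P1}; the $(1,1)$-part reads $s\omega^u + dv^u + [\omega^u, v^u] = 0$, i.e. $s\omega^u = -D^u v^u$; and the $(0,2)$-part reads $sv^u + \tfrac12[v^u, v^u] = 0$. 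This delivers the first and last relations of \eqref{NewBRST} in one stroke.

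The remaining two relations follow in the same spirit. Applying the dressed Bianchi identity $\t d\, \t\Omega^u + [\t\omega^u, \t\Omega^u] = 0$ (again a pure consequence of $\t d^2=0$) to $\t\Omega^u = \Omega^u$ and reading off the $(2,1)$-part gives $s\Omega^u + [v^u, \Omega^u] = 0$, i.e. $s\Omega^u = [\Omega^u, v^u]$. For the matter field, the dressed total covariant derivative satisfies $\t D^u\vphi^u := (\t d + \rho_*(\t\omega^u))\vphi^u = \rho(u\-)\t D\vphi$ by the same manipulation that gives the last line of \eqref{CompFields} with $d$ replaced by $\t d$; since the ghost-degree-$1$ part of $\t D\vphi = (d+s)\vphi + \rho_*(\omega+v)\vphi$ is $s\vphi + \rho_*(v)\vphi = 0$ by \eqref{BRST2}, the corresponding part of the right-hand side vanishes, and the $(0,1)$-part of the left-hand side then gives $s\vphi^u = -\rho_*(v^u)\vphi^u$.

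The computation is thus entirely algebraic, and the only delicate point I anticipate is the bookkeeping of Koszul signs in the bigraded commutator when $s$ and $d$ are moved across the form- and ghost-degree factors --- in particular the identity $\t d\, u\- = -u\-(\t d u)u\-$ that underlies every conjugation step. I expect this sign-tracking, rather than any structural difficulty, to be the main obstacle; once the convention $[\alpha,\beta] = \alpha\beta - (-1)^{\deg\alpha\,\deg\beta}\beta\alpha$ is applied consistently to total degrees, the bidegree decompositions of the dressed russian formula and dressed Bianchi identity produce \eqref{NewBRST} mechanically.
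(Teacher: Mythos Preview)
Your argument is correct and complete. The route you take is genuinely different from the paper's: the paper proceeds by expressing the undressed variables $\omega = u\,\omega^u u\- - (du)u\-$, $\Omega = u\,\Omega^u u\-$, $\vphi = \rho(u)\vphi^u$ in terms of the dressed ones, substituting into the original BRST algebra \eqref{BRST}--\eqref{BRST2}, and reorganizing --- a direct term-by-term computation. You instead exploit the russian formula by dressing the algebraic connection $\t\omega = \omega + v$ with the total differential $\t d = d+s$ and reading off the bidegree components of $\t\Omega^u = \Omega^u$ and of the dressed Bianchi identity. Your approach is more conceptual and economical: all four relations of \eqref{NewBRST} fall out simultaneously from a single curvature identity and its Bianchi companion, and the observation that $v^u$ is just the ghost-degree-$1$ part of $\t\omega^u$ makes its definition inevitable rather than posited. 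The paper's direct substitution, on the other hand, is more elementary in that it requires no prior familiarity with the horizontality condition and makes the independence from any specific form of $su$ visually transparent at each step. Both arguments make equally clear that the $K$-equivariance of $u$ is never invoked.
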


\begin{proof} 
The result is easily found by expressing the initial gauge variable $\chi=\{ \omega, \Omega, \vphi\}$ in terms of the dressed fields $\chi^u$ and the dressing field $u$, and re-injecting in the initial BRST algebra  \eqref{BRST}-\eqref{BRST2}. At no point of the derivation does $su$ need to be explicitly known. It then holds regardless if $u$ is a dressing field or not. 
\end{proof}

If the ghost $v$ encodes the infinitesimal initial $\H$-gauge symmetry, the dressed ghost $v^u$ encodes the infinitesimal residual gauge symmetry. Its concrete expression depends on the BRST transformation of $u$. 

Under the hypothesis $K\subset H$, the ghost decomposes as $v=v_\LieK + v_{\LieH/\LieK}$, and the BRST operator splits accordingly: $s=s_\LieK + s_{\LieH/\LieK}$. If $u$ is a dressing field its BRST transformation is the infinitesimal version of its defining transformation property: $s_\LieK u=-v_\LieK u$. So the dressed ghost is
\begin{align*}
v^u=u\-vu+u\-su=u\-(v_\LieK + v_{\LieH/\LieK})u+u\-(-v_\LieK u+ s_{\LieH/\LieK}u)=u\- v_{\LieH/\LieK} u+u\- s_{\LieH/\LieK}u. 
\end{align*}
We see that the Lie$\K$ part of the ghost, $v_\LieK$, has disappeared. This means that $s_\LieK \chi^u=0$, which expresses the $\K$-invariance of the composite fields \eqref{CompFields}. 

\paragraph{Residual BRST symmetry}  

In general $\LieH/\LieK$ is simply a vector space, so $s_{\LieH/\LieK} u$ is left unspecified and nothing can be said in general of $v^u$ and of the form of the modified BRST algebra \eqref{NewBRST}. But following  section \ref{Residual gauge symmetry}, if $\K\mathrel{\unlhd}H$ then $H/K=J$ is a group with lie algebra $\LieH/\LieK=\LieJ$. We here provide the BRST treatment of the two cases detailed in this section.

Suppose the dressing field satisfies the condition \eqref{CompCond}, whose BRST version is: $s_{\LieJ} u=[u, v_{\LieJ}]$. The dressed ghost is then
\begin{align}
\label{NewGhost2}
v^u=u\- v_{\LieJ} u+u\- s_{\LieJ}u=u\- v_{\LieJ} u+u\- (u v_{\LieJ} -  v_{\LieJ}u)=v_{\LieJ}. 
\end{align}
This in turn implies that the new BRST algebra is
\begin{align}
\label{NewBRST2}
s\omega^u=-D^uv_{\LieJ}=-dv_{\LieJ}-[\omega^u, v_{\LieJ}], \quad s\Omega^u=[\Omega^u, v_{\LieJ}], \quad s\vphi^u=-\rho_*(v_{\LieJ})\vphi^u,\quad \text{and } \quad sv_{\LieJ}=-\tfrac{1}{2}[v_{\LieJ}, v_{\LieJ}].
\end{align}
This is the BRST version of \eqref{GTCompFields}, and reflects the fact that the composites fields \eqref{CompFields} are genuine $\J$-gauge fields, in particular that $\omega^u$ is a $J$-connection. 

A further dressing field $u'$ would be defined by $s_{\LieJ} u'=-v_{\LieJ} u'$, and the necessary compatibility condition it needs to satisfy is $s_\LieK u'=0$. The combined dressing $uu'$ is such that $suu'=-vuu'$, so that $v^u=0$ and $s\chi^{uu'}=0$. Again the straightforward extension of the scheme to any number of dressing fields can be found in \cite{Francois2014}. 
\medskip

Suppose now that the dressing field satisfies the  condition \eqref{CompCond2}, whose BRST version is: $s_{\LieJ} u=-v_{\LieJ}u + uc_p(v_{\LieJ})$. The dressed ghost is then
\begin{align}
\label{NewGhost3}
v^u=u\- v_{\LieJ} u+u\- s_{\LieJ}u=u\- v_{\LieJ} u+u\- \left(-v_{\LieJ}u + uc_p(v_{\LieJ})\right)=c_p(v_{\LieJ}). 
\end{align}
This in turn implies that the new BRST algebra is
\begin{align}
\label{NewBRST3}
&s\omega^u=-dc_p(v_{\LieJ})-[\omega^u, c_p(v_{\LieJ})], \quad s\Omega^u=[\Omega^u, c_p(v_{\LieJ})], \quad s\vphi^u=-\rho_*(c_p(v_{\LieJ}))\vphi^u,\\[1mm]
\quad \text{and } \quad &sc_p(v_{\LieJ})=-\tfrac{1}{2}[c_p(v_{\LieJ}), c_p(v_{\LieJ})]. \notag
\end{align}
This is the BRST version of \eqref{GTCompFields2}, and reflects the fact that the composites fields \eqref{CompFields} instantiate the gauge principle in a satisfactory way.
%

\subsection{Local aspects and Physics} 
\label{Local aspects and Physics} 

 Until now we have exposed  the global aspects of the dressing approach on the bundle $\P$  to emphasize the geometric nature of the composites fields obtained, according to the given equivariance properties displayed by the dressing field. Most notably we stated that  the composite field can behave as a new kind of gauge fields. 
 
But to do Physics we need the local representatives on an open subset $\U\subset \M$ of global dressing and composite fields. These are obtained in the usual way from a local section $\sigma:\U \rarrow \P$ of the bundle. The  important properties they thus retain is their gauge invariance and residual gauge transformations. 

If it happens that a dressing field is defined locally on $\U$ first, and not directly on $\P$, then the local composite fields $\chi^u$ are defined in terms of the local dressing field $u$ and local gauge fields $\chi$ by \eqref{CompFields}. The gauge invariance and residual gauge transformations of these local composite fields are derived from the gauge transformations of the local dressing field under the various subgroups of the local gauge group $\H_\text{\tiny{loc}}$ according to $(\chi^u)^\gamma=(\chi^\gamma)^{u^\gamma}$. The BRST treatment for the local objects mirrors exactly the one given for the global objects.  

This being said, note $A=\sigma^*\omega$, $F=\sigma^*\Omega$ for definiteness but keep $u$ and $\vphi$ to denote the local dressing field and section. We state the final proposition of this section, dealing with gauge theory.
\begin{prop} 
\label{Prop-Lagrangian}
Given the geometry defined by a bundle $\P(\M, H)$ endowed with $\omega$ and the associated bundle $E$, suppose we have a gauge theory given by the prototypical $\H_\text{\tiny{loc}}$-invariant Yang-Mills Lagrangian
\begin{align*}
L(A, \vphi)=\tfrac{1}{2}\Tr(F \w * F) + \langle D\vphi, \ *D\vphi\rangle - U(||\vphi||), 
\end{align*}
where $ ||\vphi||:=|\langle \vphi \rangle |^{\sfrac{1}{2}}$. If there is a local dressing field $u: \U \rarrow G  \subset H$ with $\K_\text{\tiny{loc}}$-gauge transformation $u^\gamma=\gamma\-u$, then the above Lagrangian is actually  a $\H_\text{\tiny{loc}}/\K_\text{\tiny{loc}}$-gauge theory defined in terms of  $\K_\text{\tiny{loc}}$-invariant variables since we have
\begin{align*}
L(A, \vphi)=L(A^u, \vphi^u)=\tfrac{1}{2}\Tr(F^u \w * F^u) + \langle D^u\vphi^u, \ *D^u\vphi^u\rangle - U(||\vphi^u||)
\end{align*}
by a mere change of variables. 
\end{prop}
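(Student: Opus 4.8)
The plan is to establish the single equality $L(A,\vphi)=L(A^u,\vphi^u)$ by checking it term by term, since the three summands of the Yang-Mills Lagrangian are algebraically independent. The only inputs I need are the explicit local forms of the composite fields recorded in \eqref{CompFields} and read off via $\sigma$, namely $F^u=u\-Fu$, $\vphi^u=\rho(u\-)\vphi$ and $D^u\vphi^u=\rho(u\-)D\vphi$, together with two structural hypotheses that I would make explicit at the outset: that $\Tr$ is an $\Ad$-invariant bilinear form on $\LieH$ (so conjugation by $u:\U\rarrow G\subset H$ leaves it unchanged, which is just cyclicity $\Tr(u\-Mu)=\Tr M$), and that the inner product $\langle\ ,\ \rangle$ on $V$ is $\rho(G)$-invariant (so $\rho(u)$ acts pointwise as an isometry). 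Both are already built into the definition of a well-posed Yang-Mills Lagrangian in Section \ref{Basic differential geometry}, hence available by assumption.

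Before the term-by-term check I would dispose of a technical point common to all three terms: the Hodge star $*$ is a zeroth-order, $C^\infty(\U)$-linear operator acting only on the exterior form indices, so it commutes with the pointwise algebraic action of the matrix-valued field $u$. Concretely, for any $\LieH$-valued form $\beta$ one has $*(u\-\beta u)=u\-(*\beta)u$, and for any $V$-valued form $\alpha$ one has $*\big(\rho(u\-)\alpha\big)=\rho(u\-)(*\alpha)$, because at each point $u(x)$ is a fixed linear map that factors through the $C^\infty$-linearity of $*$. This is the one place where genuine care is warranted, though the verification itself is routine.

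With this in hand the three terms collapse immediately. For the gauge term, $F^u\w *F^u=(u\-Fu)\w(u\-(*F)u)=u\-(F\w *F)u$, so cyclicity of $\Tr$ gives $\tfrac12\Tr(F^u\w *F^u)=\tfrac12\Tr(F\w *F)$. For the kinetic term, $\rho(G)$-invariance of $\langle\ ,\ \rangle$ together with the commutation just noted yields $\langle D^u\vphi^u,*D^u\vphi^u\rangle=\big\langle \rho(u\-)D\vphi,\ \rho(u\-)(*D\vphi)\big\rangle=\langle D\vphi,*D\vphi\rangle$. For the potential, the same invariance gives $\|\vphi^u\|^{2}=\big|\langle\rho(u\-)\vphi,\rho(u\-)\vphi\rangle\big|=|\langle\vphi,\vphi\rangle|=\|\vphi\|^{2}$, hence $U(\|\vphi^u\|)=U(\|\vphi\|)$. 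Summing the three identities gives $L(A^u,\vphi^u)=L(A,\vphi)$, which is precisely the asserted change of variables.

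I would then close by drawing the stated conclusion: since each composite field is $\K_{\text{loc}}$-invariant by Proposition \ref{P1}, the right-hand expression exhibits $L$ as a functional of $\K_{\text{loc}}$-invariant variables, so the theory retains exactly the residual $\H_{\text{loc}}/\K_{\text{loc}}$ gauge freedom, acting on the composites through \eqref{GTCompFields} or \eqref{GTCompFields2} according to the equivariance type of $u$. The main obstacle is therefore not computational but one of bookkeeping: flagging the invariance hypotheses on $\Tr$ and $\langle\ ,\ \rangle$ and justifying that $*$ commutes with the internal group action; once these are in place the equality is immediate.
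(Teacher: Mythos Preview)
Your proof is correct and is essentially an unpacking of the paper's argument: the paper simply observes that $L(A^\gamma,\vphi^\gamma)=L(A,\vphi)$ holds as a \emph{formal} algebraic identity for any map $\gamma:\U\rarrow H$, hence in particular for $u:\U\rarrow G\subset H$, without spelling out the term-by-term invariances (cyclicity of $\Tr$, $\rho$-invariance of $\langle\ ,\ \rangle$, commutation of $*$ with the internal action) that you make explicit. Your version has the virtue of flagging precisely which structural hypotheses are being used, while the paper's one-liner trades that for concision.
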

\begin{proof}
The result follows straightforwardly from the $\H_\text{\tiny{loc}}$-invariance of the initial Lagrangian. Since $L(A^\gamma, \vphi^\gamma)=L(A, \vphi)$ for $\gamma:\U \rarrow H$,
 holds as a formal property of $L$, it follows that $L(A^u, \vphi^u)=L(A, \vphi)$ for $u:\U \rarrow G \subset H$.
\end{proof}

Notice  that since $u$ is a dressing field, $u \notin \H_\text{\tiny{loc}}$ so the dressed Lagrangian $L(A^u, \vphi^u)$ ought not to be confused with a gauge-fixed Lagrangian $L(A^\gamma, \vphi^\gamma)$ for some chosen $\gamma \in \H_\text{\tiny{loc}}$, even if it may happen that $\gamma=u$.
 A fact that might go unnoticed. As we've stressed in the opening of section \ref{Reduction of gauge symmetries: the dressing field method}, the dressing field approach is distinct from both gauge-fixing and spontaneous symmetry breaking as a means to reduce gauge symmetries. 

 Let us highlight the fact that a dressing field can often be constructed  by requiring the gauge invariance of a prescribed ``gauge-like condition''. 
  Such a condition is given when a local gauge field $\chi$ (often the gauge potential)  transformed by a field $u$ with value in the symmetry group $H$, or one of its subgroups, is required to satisfy a functional constraint: $\Sigma(\chi^u)=0$.  Explicitly solved, this makes $u$ a function of $\chi$, $u(\chi)$, thus sometimes called \emph{field dependent gauge transformation}. However this terminology is valid if and only if  $u(\chi)$ transforms under the action of $\gamma\in \H_\text{\tiny{loc}}$ as $u(\chi)^\gamma:=u(\chi^\gamma)=\gamma\- u(\chi) \gamma$, in which case $u(\chi) \in \H_\text{\tiny{loc}}$.  But if the functional constraint  still holds under the action of $\H_\text{\tiny{loc}}$, or of a subgoup thereof, it follows  that $(\chi^\gamma)^{u^\gamma}=\chi^u$ (or equivalently that $s\chi^u=0$). This in turn imposes that $u^\gamma=\gamma\- u$ (or $su=-vu$) so that $u \notin \H_\text{\tiny{loc}}$ but is indeed a dressing field.

 This and the above proposition generalizes the pioneering idea of Dirac \cite{Dirac55, Dirac58}  aiming at quantizing QED by rewriting the classical theory in terms of gauge-invariant variables. The idea was rediscovered several times, early by Higgs himself \cite{Higgs66} and Kibble \cite{Kibble67}. The invariant variables were sometimes termed \emph{Dirac variables} \cite{Pervushin, Lantsman} and reappeared in various contexts in gauge theory, such as  QED \cite{Lavelle-McMullan93},  quarks theory in QCD \cite{McMullan-Lavelle97}, the proton spin decomposition controversy  \cite{LorceGeomApproach, Leader-Lorce, FLM2015_I} and most notably  in electroweak theory and Higgs mechanism \cite{Frohlich-Morchio-Strocchi81, McMullan-Lavelle95, Chernodub2008, Faddeev2009, Masson-Wallet, Ilderton-Lavelle-McMullan2010, Struyve2011, vanDam2011}. Indeed,  proposition \ref{Prop-Lagrangian} applies to the electroweak sector of the Standard Model and thus provides an alternative  to the usual textbook interpretation of the Higgs mechanism in terms of spontaneous symmetry breaking, see \cite{GaugeInvCompFields, Francois2014}. 
  
The dressing field approach thus gives a unifying and clarifying framework for these works, and others concerning the BRST treatment of anomalies in QFT \cite{Manes-Stora-Zumino1985, Garajeu-Grimm-Lazzarini}, Polyakov's ``partial gauge fixing'' for $2D$-quantum gravity \cite{Polyakov1989, Lazzarini2008} or the construction of the Wezz-Zumino functionnal \cite{Attard-Lazz2016}.
 It is the aim of this paper and  its companion to show that both tractors and twistors can also be encompassed by this approach, which furthermore highlights their nature as gauge fields of a non-standard kind. The case of twistors is dealt with in the next section.

\section{Twistors from conformal Cartan geometry via dressing} 
\label{Twistors from conformal Cartan geometry via dressing} 

 Due to the important progress of the last twenty years, the term \emph{twistor} is now more general than it used to.
 Twistor theory for various differential geometric structures has been devised, e.g for paraconformal manifolds in \cite{Bailey-Eastwood91}.  
 And the reference text \cite{Cap-Slovak09} defines it for any parabolic geometry.

  However, as mentioned in our introduction, initially the twistor bundle was devised for conformal manifolds and constructed via prolongation of a defining differential equation. A procedure deemed more  explicit for calculational purposes than the  bundle construction \cite{Bailey-Eastwood91}. 
  We briefly review this procedure in the following subsection, so that the reader can compare with the derivation via the dressing field method in the next.

\subsection{Bottom-up construction via prolongation of the Twistor Equation: a reminder} 
\label{Bottom-up construction via prolongation of the Twistor Equation: a reminder} 

We  essentially follow the expositions and use the notations of \cite{Penrose-Rindler-vol2} section $6.9$, and \cite{Bailey-Eastwood91} section $6.1$. But first we need to remind how points of Minkowski space $\sM:=(\RR^4, \eta)$ are represented as hermitian matrices and how the actions of the Lorentz group and Lie algebra are represented. This will  be useful latter on. 

Let $x=x^a$ the column vector representing, in abstract index notation, the coordinates of a point in $\sM$ w.r.t any basis  $\{e_a\}_{a=0, \ldots, 3}$.
The spacetime interval is given by $\lVert x \rVert^2=(x^0)^2 - (x^1)^2 - (x^2)^2 -(x^3)^2 =x^T\eta x$. An element $S={S^a}_b$ of the Lorentz group $S\!O(1, 3)=\left\{S \in GL_4(\RR)\ | S^T\eta S=\eta  \right\}$ acts as: $x'=Sx \rarrow {x'}^a={S^a}_b x^b$. Its Lie algebra $\so(1, 3)=\left\{ s\in GL_4(\RR)\ | \  s^T\eta + \eta s=0  \right\}$ acts likewise: $x'=sx \rarrow {x'}^a={s^a}_b x^b$.

Consider $\left\{\s_a^{AA'}\right\}_{a=0, \ldots, 3}$, a basis of $2\times2$ hermitian matrices Herm$(2, \CC)=\left\{ M \in GL_2(\CC)\ | \ M^*=M  \right\}$, where $*$ is the operation of transposition-conjugation. As vector spaces, $\sM$ and Herm$(2, \CC)$ are isomorphic via 
\begin{align}
\label{MorphMinkowski}
\sM &\rarrow \text{Herm}(2, \CC),  \notag\\
x=x^a &\mapsto \b x=\b x^{AA'}:=x^a \s_a^{AA'} = \tfrac{1}{2}\begin{pmatrix} x^0+x^3 & x^1 - ix^2 \\ x^1 + ix^2 & x^0 - x^3 \end{pmatrix}. 
\end{align}
Upper case Latin letters are Weyl spinor indices, taking values $0$ and $1$. The spacetime interval is then given by $\lVert x\rVert^2=4 \det(\b x)$. The action of $S\!O(1, 3)$ on $x$ preserving $\eta$  is represented by the action of $S\!L(2, \CC)$ on $\b x$ preserving~$\det$:
\smallskip

\begin{align}
\label{GroupMorph-Lorentz-SL}
\begin{split}
S\!O(1, 3) \times \sM  &\rarrow \sM, \\
(S, x) &\mapsto Sx  
\end{split}
\qquad \qquad \qquad \Rightarrow \hspace{-5mm}
\begin{split}
S\!L(2, \CC) \times \text{Herm}(2, \CC) &\rarrow \text{Herm}(2, \CC),  \\
(\b S, \b x) &\mapsto \b S \b x \b S^*
\end{split}
\end{align}
Since $\b S$ and $-\b S$ represent the same Lorentz transformation, the homomorphism $S\!O(1, 3) \rarrow S\!L(2, \CC)$ is $1:2$ covering. It is a spin representation of the Lorentz group. 
The action of $\so(1, 3)$ is likewise represented by the action of $\sl(2, \CC)$:
\begin{align}
\label{LieMorph-Lorentz-SL}
\begin{split}
\so(1, 3) \times \sM  &\rarrow \sM, \\
(s, x) &\mapsto sx  
\end{split}
\qquad \qquad \qquad \Rightarrow \hspace{-5mm}
\begin{split}
\sl(2, \CC) \times \text{Herm}(2, \CC) &\rarrow \text{Herm}(2, \CC),  \\
(\b s, \b x) &\mapsto \b s \b x + \b x \b s^*
\end{split}
\end{align}

Now, to construct twistors, one starts with a $4$-dimensional conformal manifold $(\M, c)$ with $c$ the conformal class of the Levi-Civita connection. 
Tensorial indices, represented by lower case Greek indices $(\mu,\nu, \ldots)$, can be converted to Minkowski indices $(a, b, \ldots )$ via a tetrad/vierbein field ${e^a}_\mu$ (related to a choice of metric $g\in c$). Then Minkowski indices can then be converted into spinor indices $(AA', BB', \ldots )$ via the isomorphism  \eqref{MorphMinkowski}.
One then defines the Twistor Equation as
\begin{align}
\label{TE}
{\nabla^{(A}}_{A'}\omega^{B)}=0, \quad \text{ or equivalently as}\quad  \nabla_{AA'}\omega^B  -  \tfrac{1}{2}\delta^B_A\ \nabla_{CA'}\omega^C=0
\end{align}
 where  $\nabla$ is the Levi-Civita connection associated to a choice of metric $g \in c$ and $\omega^B: \M \rarrow \CC^2$ is a Weyl spinor. This is the differential equation to be prolonged and recast as a system of differential equations.
To do so one defines the intermediary dual spinor variable $\pi_{A'}:= \tfrac{i}{2}\nabla_{CA'}\omega^C$ so that the Twistor Equation is
\begin{align}
\label{TE2}
\nabla_{AA'}\omega^B + i\delta^B_A \pi_{A'}=0.
\end{align}
One only has to find a constraint equation on $\pi_{A'}$ to close the system. This is done by applying $\nabla$ again, and after some algebra equation \eqref{TE} is replaced by the linear system
\begin{align}
\label{TE3}
\nabla_{AA'}\omega^B + i\delta^B_A \pi_{A'}=0, \qquad  \nabla_{AA'}\pi_{B'}- i \b P_{AA'BB'}\omega^B =0,
\end{align}
where $\b P_{AA'BB'} \simeq P_{ab}:=-\tfrac{1}{2}\left( R_{ab} - \tfrac{1}{6}Rg_{ab} \right)$ is the Schouten tensor.  
This system can be rewritten as the action of a linear operator $\nabla_{AA'}^\sT$ on the bi-spinor $Z^\alpha=(\omega^B, \pi_{A'})\in \CC^4$
\begin{align}
\label{TE4}
\nabla^\sT_{AA'}Z^\alpha =0, \qquad \Rightarrow \qquad \nabla_{AA'} \begin{pmatrix}[1.2] \omega^B \\ \pi_{B'} \end{pmatrix}   +  \begin{pmatrix}[1.2] 0  & i\delta_A^B \\ -i \b P_{AA'BB'} & 0 \end{pmatrix}  \begin{pmatrix}[1.2]\omega^B  \\ \pi_{A'}  \end{pmatrix} =0.
\end{align}
Given the Weyl rescaling of the metric $\h g=z^2 g$, after some algebra one finds that the connection changes as $\h \nabla_{AA'} X^C= \nabla_{AA'} X^C + \delta^C_A\b\Upsilon_{DA'}X^D $ and the Schouten tensor changes as ${\h {\b P}}_{AA'BB'}={\b P}_{AA'BB'} + \nabla_{AA'} \b \Upsilon_{BB'}- \b \Upsilon_{AB'} \b \Upsilon_{BA'}$, with $\b \Upsilon_{AA'}:=z\- \d_{AA'} z$  ($\Upsilon_a:=\d_a \ln z$).
So by \emph{requiring} the conformal invariance of the first spinor component, one fixes the conformal transformation of the second spinor component:
\begin{align}
\h \omega^A&= \omega^A, \notag\\[-7mm]
& \qquad \qquad \qquad \qquad \qquad \qquad   \text{ Or in matrix form, }\qquad  \begin{pmatrix}[1.2]  \h \omega^A \\ \h \pi_{A'} \end{pmatrix} =  \begin{pmatrix}[1.2] \1 & 0 \\ i\b\Upsilon_{AA'} & \1\end{pmatrix}\begin{pmatrix}[1.2]  \omega^A \\ \pi_{A'} \end{pmatrix}. \label{GTtwistor}\\[-7mm]
\h\pi_{A'}&=\pi_{A'} + i\b\Upsilon_{AA'}\omega^A. \notag
\end{align} 
 This, one may consider as a gauge transformation so that \emph{generic} bi-spinors $Z^\alpha=(\omega^A,  \pi_{A'})$ gauge-related by \eqref{GTtwistor}, called \emph{twistors}, are considered as sections  of a vector bundle over $(\M, c)$ with fiber $\CC^{4}$: the  \emph{local twistor bundle} $\sT$. 

With still more algebra and the relation $\h\nabla_{AA'}X_{C'}= \nabla_{AA'}X_{C'}- \Upsilon_{C'A} X_{A'}$, one finds that this gauge equivalence holds also for the bi-spinor defined by \eqref{TE3}
\begin{align}
\label{GTtwistor-connection}
\begin{pmatrix}[1.2]  \h{(\nabla_{AA'}\omega^D + i\delta^D_A \pi_{A'})} \\ \h{( \nabla_{AA'}\pi_{B'}- i \b P_{AA'BB'}\omega^B)} \end{pmatrix}  =  \begin{pmatrix}[1.2] \1 & 0 \\ i\b\Upsilon_{DB'} & \1\end{pmatrix}\begin{pmatrix}[1.2] \nabla_{AA'}\omega^D + i\delta^D_A \pi_{A'} \\ \nabla_{AA'}\pi_{B'}- i\b P_{AA'BB'}\omega^B \end{pmatrix}.
\end{align}
So the linear operator $\nabla_{AA'}^\sT$  \eqref{TE4} defines a covariant derivative on $\sT$  called the twistor transport or \emph{twistor connection}. A twistor satisfying $\nabla_{AA'}^\sT Z^\alpha=0$ is a global twistor and coincides with the notion of twistor from which Penrose wanted  to derive Minkowski space.

There is furthemore a well defined bilinear form on twistors $Z, Z' \in \Gamma(\sT)$ defined by
\begin{align}
\label{metric_twistor}
\langle Z, Z'  \rangle:= \pi_A^* {\omega'}^A + \omega^{A'*} \pi'_{A'} .
\end{align}
 Indeed it is invariant under Weyl rescaling $\langle \h Z, \h {Z'}\rangle=\langle Z, Z' \rangle$, as can be verified via \eqref{GTtwistor}. One also checks via \eqref{TE4} that, like a Levi-Civita connection, the twistor connection preserves the metric thus defined since $\nabla_\sT \langle Z, Z' \rangle = 2 \langle \nabla^\sT Z, Z' \rangle$. In  physics, the  \emph{helicity} of a twistor $Z$, usually noted $s$, is defined as half its norm: $\langle Z, Z\rangle$=2s. 

The commutator of the twistor connection defines the \emph{local twistor curvature}  $\mathsf{K}$,
\begin{align}
\label{twistor_curvature}
\left[\nabla^\sT, \nabla^\sT \right]Z=  \mathsf{K} 
 Z=\begin{pmatrix}[1.2] \ \ \b W & 0 \\ -i \b C & -\b W^*\end{pmatrix}  \begin{pmatrix}[1.2] \omega \\ \pi \end{pmatrix},
\end{align}
where $C=\nabla P$ is the Cotton tensor, and $W$ is the Weyl tensor. 
From this one sees immediately that the twistor connection $\nabla_\mu^\sT$ is flat if and only if $(\M, c)$ is conformally flat. 
\medskip 

This is how is constructed the local twistor bundle $\sT$ endowed with the twistor connection $\nabla^\sT$, bottom up from the Twistor  Equation on a conformal manifold $(\M, c)$.  This approach, while presenting the advantage of being explicit, involves some amount of computation in order to derive the basic objects and their transformation properties. In the next section we lay our case that  objects very much like these can be recovered with much less computation, top-down from a gauge structure over $\M$ via the dressing field method. By doing so, the nature of the twistors and twistor connection as gauge fields of the non-standard kind described in section \ref{The composite fields as  a new kind of gauge fields}  is made clear.

\subsection{Top-down gauge theoretic approach via dressing} 
\label{Top-down gauge theoretic approach via dressing} 

The gauge structure on spacetime $\M$ that we start with is the conformal Cartan geometry and its associated spin bundle.   We describe it in the following subsection, and the dressing field method is applied in the next.

\subsubsection{The conformal Cartan bundle and its spin vector bundle}  
\label{The conformal Cartan bundle and its spin vector bundle}  

Since we are ultimately interested in twistors, we are concerned with $4$-dimensional base manifolds $\M$ despite the fact that the conformal Cartan bundle can be defined for dimension $\geq 3$. The conformal Cartan geometry $(\P, \varpi)$ is said modeled on the Klein model $(G, H)$ where $G=PS\!O(2, 4)=\left\{ M \in GL_{6}(\RR)\  | \ M^T \Sigma M= \Sigma, \det{M}=1 \right\}/ \pm \id$ with $\Sigma=\begin{psmallmatrix}  0 & 0 & -1 \\ 0 & \eta & 0 \\ -1 & 0 & 0 \end{psmallmatrix}$, $\eta$ the flat metric of signature $(1, 3)$, and $H$ is a parabolic subgroup such that the Homogeneous space $G/H \simeq (S^1 \times S^3) / \mathbb{Z}^2$ is the conformal compactification of Minkowski space, $\sM$. The structure group of the conformal Cartan bundle $\P(\M, H)$ comprises Lorentz, Weyl and conformal boost symmetries and has the following matrix presentation \cite{Cap-Slovak09, Sharpe}
\begin{align*}
 H = K_0\, K_1=\left\{ \begin{pmatrix} z &  0 & 0  \\  0  & S & 0 \\ 0 & 0 & z^{-1}  \end{pmatrix}\!  \begin{pmatrix} 1 & r & \tfrac{1}{2}rr^t \\ 0 & \1 & r^t \\  0 & 0 & 1\end{pmatrix}  \bigg|\ z\in W:=\RR^*_+,\ S\in S\!O(1, 3), 
\ r\in \RR^{4*} \right\}.
\end{align*} 
Here ${}^t$ stands for the $\eta$-transposition, namely for the row vector $r$ one has $r^t = (r \eta^{-1})^T$ (the operation ${}^T\,$ being the usual matrix transposition), and $\RR^{4*}$ is the dual of $\RR^4$.  
Clearly $K_0\simeq CO(1, 3)$ via $(S, z) \rarrow zS$, and $K_1$ is the abelian group of conformal boosts. 
The corresponding Lie algebras  $(\LieG, \LieH)$ are graded \cite{Kobayashi}:  $[\LieG_i, \LieG_j] \subseteq \LieG_{i+j}$, $i,j=0,\pm 1$ with the abelian Lie subalgebras $[\LieG_{-1}, \LieG_{-1}] = 0 = [\LieG_1, \LieG_1]$. They decompose respectively as, $\LieG=\LieG_{-1}\oplus\LieG_0\oplus\LieG_1 \simeq \RR^4\oplus\co(1, 3)\oplus\RR^{4*}$ and $\LieH=\LieG_0\oplus\LieG_1 \simeq \co(1, 3)\oplus\RR^{4*}$. In matrix notation we have,
\begin{align*}
\mathfrak{g} = \left\{ 
\begin{pmatrix} \epsilon &  \rho & 0  \\  \tau  & s & \rho^t \\ 0 & \tau^t & -\epsilon  \end{pmatrix} \bigg|\ (s-\epsilon\1)\in \mathfrak{co}(1, 3),\ \tau\in\mathbb{R}^4,\ \rho\in\mathbb{R}^{4*}  
\right\} 
\supset
\LieH = \left\{ \begin{pmatrix} \epsilon &  \rho & 0  \\  0  & s & \rho^t \\ 0 & 0 & -\epsilon  \end{pmatrix} \right\},
\end{align*} 
with the $\eta$-transposition $\tau^t = (\eta\tau)^T$ of the  column vector $\tau$.
The graded structure of the Lie algebras is automatically handled by the matrix commutator.

The Cartan bundle $\P$ is then endowed with the conformal Cartan connection, whose local representative on $\U \subset \M$ is $\varpi  \in \Lambda^1(\U , \LieG)$ with curvature $\Omega\in \Lambda^2(\U, \LieG)$. In matrix representation
\begin{align*}
\varpi =\begin{pmatrix} a & P & 0 \\ \theta & A & P^t \\0 & \theta^t & -a \end{pmatrix}, \qquad \text{and} \quad  \Omega=d\varpi+\varpi^2=\begin{pmatrix} f & C & 0 \\ \Theta & W & C^t \\0 & \Theta^t & -f \end{pmatrix}.
\end{align*}
The soldering part of $\varpi$ is $\theta=e\cdot dx$, i.e  $\theta^a:={e^a}_\mu dx^\mu$, with $e={e^a}_\mu$ the so-called vierbein or tetrad field.\footnote{ Notice that from now on we shall make use of  ``$\cdot$'' to denote Greek indices contractions, while Latin indices contraction is naturally understood from matrix multiplication. }
A metric $g$ of signature $(1, 3)$ on $\M$ is induced from $\eta$ via  $\varpi$ according to $g(X, Y):=\eta\left( \theta(X), \theta(Y)\right)=\theta(X)^T\eta \theta(Y)$, or in a way more familiar to physicists $g:=e^T\eta e \rarrow g_{\mu\nu}={e_\mu}^a\eta_{ab} {e^b}_\nu$.

It should be noted that the gauge structure $(\P, \varpi)$ on $\M$ is not equivalent to a conformal class of metrics $c$. Indeed, the action of the local gauge group $\H_\text{\tiny{loc}}$ on $\varpi$ induces a conformal class of metrics via its soldering part, but the degrees of freedom of $\varpi$ compensated for by the gauge symmetry $\H_\text{\tiny{loc}}$ still amounts to more than $9=[c]$. 

But there is a way to make this Cartan geometry equivalent to a conformal manifold $(\M, c)$. In a way similar to the singling out of the Levi-Civita connection among all linear connections as the unique torsion-free and metric compatible connection, one can single out the so-called \emph{normal} conformal Cartan connection $\varpi_\text{\tiny{N}}$ as the unique one satisfying the constraints 
$\Theta=0$ (torsion free) and  ${W^a}_{bad}=0$.
 Together with the $\LieG_{-1}$-sector of the Bianchi identity $d\Omega+[\varpi, \Omega]=0$, these constraints imply $f=0$ (trace free), so that the curvature of the normal Cartan connection reduces to 
 $\Omega_\text{\tiny{N}}=\begin{psmallmatrix} 0 & C & 0  \\ 0 & W & C^t \\[0.5mm] 0 & 0 & 0 \end{psmallmatrix}$.
 From the normality condition ${W^a}_{bad}=0$  follows that $P$ 
 has components (in the $\theta$ basis of $\Omega^\bullet (\U)$)
 $P_{ab}=-\frac{1}{2} \left( R_{ab} - \frac{R}{6}\eta_{ab} \right)$,
 where $R$ and $R_{ab}$ are the Ricci scalar and Ricci tensor associated with the $2$-form $R=dA+A^2$. In turn, from this  follows that $W= R + \theta P + P^t\theta^t$ is the well known Weyl $2$-form. By the way, in the gauge $a=0$, $C:=dP +P A=DP$ looks like the familiar Cotton $2$-form.

 The gauge structure $(\P, \varpi_\text{\tiny{N}})$ is indeed equivalent to a conformal class of metric $c$ on $\M$. However, it would be hasty to  identify $A$ in $\varpi$ or $\varpi_\text{\tiny{N}}$ with the spin connection one is familiar with in physics, and by a way of consequence to take $R:=dA+A^2$ and $P$ as the Riemann and Schouten tensors. Indeed,  contrary to expectations $A$ is invariant under Weyl rescaling and neither $R$ nor $P$ have the known Weyl transformations. It turns out that one recovers the spin connection and the mentioned associated tensors only after  a dressing operation. See \cite{Attard-Francois2016}. 
\medskip

\paragraph{Spin representation} 

In the same way that there is a spin representation  $S\!O(1, 3) \xrightarrow{1:2} S\!L(2, \CC)$ of the Lorentz group, there is a spin representation of the conformal group $S\!O(2, 4) \xrightarrow{1:2}S\!U(2, 2)$, with the special unitary group $S\!U(2, 2)=\left\{ M \in GL_4(\CC)\ | \ M^*\b\Sigma M =\b\Sigma, \det M=1 \right\}$ with $\b\Sigma=\begin{psmallmatrix} 0 & \1 \\ \1 & 0 \end{psmallmatrix}$. The latter gives by restriction a representation of the structure group $H \xrightarrow{1:2} \b H$. We describe the latter in matrix notation as
\begin{align}
\label{IsoH-bH}
H=K_0K_1 \ \rarrow \ \b H=\b K_0 \b K_1 := \left\{ \begin{pmatrix}[1.2] z^{\sfrac{1}{2}} {\b S}^{-1*} & 0 \\ 0 & z^{-\sfrac{1}{2}} \b S\  \end{pmatrix}\begin{pmatrix}[1.2] \1 & -i \b r \\ 0 & \1 \end{pmatrix} \bigg|\ z\in\RR^*_+,\ \b S\in S\!L(2, \CC), 
\ \b r\in \text{Herm}(2, \CC)   \right\},
\end{align}
where we used \eqref{GroupMorph-Lorentz-SL} and the explicit isomorphism
\begin{align}
\label{LieMorph-Lorentz-SL2}
\RR^{4*} &\rarrow \text{Herm}(2, \CC),  \notag\\
r=x^t=x^T \eta  &\mapsto \b r:=x^0\s_0- x^i \s_i = \tfrac{1}{2}\begin{pmatrix} x^0-x^3 & -x^1 + ix^2 \\ -x^1 - ix^2 & x^0 + x^3 \end{pmatrix}. 
\end{align}
Using \eqref{MorphMinkowski}, \eqref{LieMorph-Lorentz-SL} and \eqref{LieMorph-Lorentz-SL2}, the Lie algebra morphism $\so(2, 4)=\LieG \rarrow \su(2, 2)=\b \LieG$ is then explicitely given by 
\begin{align}
\label{LieAlg-Iso}
\b\LieG =\b\LieG_{-1}+\b\LieG_0+\b\LieG_1= \left\{ 
\begin{pmatrix} -(\b s^* -\sfrac{\epsilon}{2}\1) &  -i\b\rho \\  i\b\tau  & \b s-\sfrac{\epsilon}{2}\1  \end{pmatrix}   
\bigg|\ \epsilon\in\RR, \b s\in \sl(2,\CC) \text{ and } \b \tau, \b\rho \in \text{Herm}(2, \CC)  \right\}
\supset
\b\LieH = \b\LieG_0+\b\LieG_1.
\end{align} 
The graded structure of $\b \LieG$ implies that 
\begin{align*}
[\b\LieG_{-1}, \b\LieG_1] \ni \left[ \begin{pmatrix} 0 & 0 \\ i\b\tau & 0  \end{pmatrix}, \begin{pmatrix} 0 & -i\b\rho \\ 0 & 0  \end{pmatrix} \right]=\begin{pmatrix} -\b\rho\b\tau & 0 \\ 0 & \b\tau \b\rho\end{pmatrix}=\begin{pmatrix} -\left(  (\b\tau\b\rho)_0^* +  \sfrac{\rho\tau}{2}\1\right) & 0 \\ 0 &  (\b\tau\b\rho)_0 + \sfrac{\rho\tau}{2}\1 \end{pmatrix} \in \b\LieG_0,
\end{align*}
so that $(\b\tau\b\rho)_0\in \sl(2, \CC)$ and $\rho\tau \in \RR$ is a scalar product. Clearly, $\b\LieG_{-1}$ and $\b\LieG_1$ are abelian. Relations \eqref{IsoH-bH} and \eqref{LieAlg-Iso}  reproduce in a handy and readable way some of the results of \cite{Klotz74}. 

The complex representation space for $G$ and $H$ is $\CC^4$. So, one may form the vector bundle $\sE=\P \times_{\b H}\CC^4$  associated to the Cartan bundle $\P(\M, H)$. A section of $\sE$ is a $\b H$-equivariant map on $\P$ whose local expression is 
\begin{align*}
\psi: \U \subset \M \rarrow \CC^4, \quad \text{given explicitely as column vectors }\quad \psi=\begin{pmatrix} \pi\\ \omega  \end{pmatrix}, \quad \text{ with }  \pi, \omega \in \CC^2 \text { dual Weyl spinors}.
\end{align*}
The covariant derivative induced by the Cartan connection is $\b D\psi=d\psi+\b\varpi\psi$, so that $\b D^2 \psi=\b \Omega \psi$. The spinorial conformal Cartan connection  and its curvature read
\begin{align*}
\b\varpi=\begin{pmatrix}[1.2]  -( \b A^* - \sfrac{a}{2}\1 )  &  -i\b P \\ i\b \theta & \b A -\sfrac{a}{2}\1  \end{pmatrix} \qquad \text{ and }  \qquad \b\Omega=\begin{pmatrix}[1.2]  -( \b W^* - \sfrac{f}{2}\1 )  &  -i\b C \\ i\b \Theta & \b W -\sfrac{f}{2}\1  \end{pmatrix}.
\end{align*}

The group metric $\b\Sigma$ naturally defines an invariant bilinear form on sections of $\sE$: given $\psi, \psi' \in \Gamma(\sE)$ one has
\begin{align*}
\langle \psi, \psi'  \rangle= \psi^* \b\Sigma \psi'=(\pi^*, \omega^*) \begin{pmatrix}   0 & \1 \\ \1 & 0 \end{pmatrix} \begin{pmatrix}\pi' \\ \omega' \end{pmatrix}= \pi^*\omega'+ \omega^*\pi'.
\end{align*} 
The covariant derivative $\b D$ naturally preserves this bilinear form since $\b \varpi$ is $\b \LieG$-valued: $\b D\b\Sigma=d\b\Sigma + \b \varpi^T \b\Sigma + \b\Sigma \b\varpi=0$.

\paragraph{Gauge transformations}

It would be tempting to identify $\sE$ with the twistor bundle. However its sections and covariant derivative thereof do not undergo  the defining Weyl transformation of a twistor as defined in section \ref{Bottom-up construction via prolongation of the Twistor Equation: a reminder}.
Indeed an element $\b \gamma$ of the local gauge group $\b\H=\b\K_0\b\K_1$ (we now drop the subscript ``loc'') can be factorized as $\b\gamma=\b\gamma_0\b\gamma_1: \U \rarrow \b H=\b K_0\, \b K_1$ with $\b\gamma_0\in \b\K_0 := \left\{ \gamma :\U\rarrow \b K_0  \right\}$ and $\b\gamma_1 \in \b\K_1:= \left\{  \b\gamma :\U\rarrow \b K_1 \right\}$.
Accordingly, through simple matrix calculations, the gauge transformations of $\psi$ w.r.t $\b\K_0$ and $\b\K_1$ are found to be
\begin{align}
\label{GT_psi}
\psi^{\b\gamma_0}= {\b\gamma_0}\-\psi \quad \rarrow \quad \begin{pmatrix} \pi^{\b\gamma_0} \\[1mm] \omega^{\b\gamma_0}  \end{pmatrix}=\begin{pmatrix} z^{\sfrac{1}{2}}\b S^* \pi \\[1mm]  z^{-\sfrac{1}{2}}\b S\- \omega \end{pmatrix}, \qquad \text{ and } \qquad 
\psi^{\b\gamma_1}= {\b\gamma_1}\-\psi \quad \rarrow \quad \begin{pmatrix} \pi^{\b\gamma_1} \\[1mm] \omega^{\b\gamma_1}  \end{pmatrix}=\begin{pmatrix} \pi + i\b r \omega \\[1mm] \omega \end{pmatrix}. 
\end{align}
 The same goes for $\b D\psi^{\b\gamma_0}$ and $\b D\psi^{\b \gamma_1}$. In the first relation put $\b S=\1$, compare with \eqref{GTtwistor} and notice the difference. It is clear that as it stands, $\sE$ is not the twistor bundle $\sT$ as previously defined. 
 
As for the Cartan connection, its  gauge transformation  w.r.t $\b\K_0$ is
\begin{align}
\label{GT_0}
\b\varpi^{\b\gamma_0}& =\b\gamma_0\-\b\varpi\b\gamma_0 + \b\gamma_0\-d\b\gamma_0 , \\[2mm]
\begin{pmatrix}[1.2]  -( \b A^* - \sfrac{a}{2}\1 )  &  -i\b P \\ i\b \theta & \b A -\sfrac{a}{2}\1  \end{pmatrix}^{\b\gamma_0}
&= \begin{pmatrix}[1.2]  -\left[ \left(\b S^*\b A^* \b S^{-1*}+d\b S \b S^{-1*} \right) -\sfrac{(a+z\-dz)}{2}\1 \right] & -i\ z\- \b S^*\b P \b S  \\  i\ z\b S\- \b \theta \b S^{-1*} & \b S\- \b A \b S + \b S\- d\b S - \sfrac{(a+z\-dz)}{2}\1 \end{pmatrix},  \notag
\end{align}
 and w.r.t $\b\K_1$ it reads
\begin{align}
\label{GT_1}
\b\varpi^{\b\gamma_1}& =\b\gamma_1\-\b\varpi\b\gamma_1 + \b\gamma_1\-d\b\gamma_1 , \\[2mm]
\begin{pmatrix}[1.2]  -( \b A^* - \sfrac{a}{2}\1 )  &  -i\b P \\ i\b \theta & \b A -\sfrac{a}{2}\1  \end{pmatrix}^{\b\gamma_1}
&= \begin{pmatrix}[1.2]   -\left[ \left( \b A^* + (\b r \b \theta)_0 \right) -\sfrac{(a-r\theta)}{2}\1 \right]  & -i\left[ \b P +d\b r - (\b r\b A + \b A^*\b r) + a\b r- \b r\b\theta \b r\right] \\ i\b\theta & \b A + (\b\theta\b r)_0 -\sfrac{(a-r\theta)}{2}\1 \end{pmatrix}. \notag
\end{align}
It is clear from the transformation of the soldering part $\b\theta$ of $\b\varpi$, that the metric induced by $\varpi^{\gamma_0}$ is $z^2g$. Thus the action of $\H$ on $\varpi$ induces a conformal class of metric $c$ on $\M$. But notice again that the Weyl transformations of $\b A$, $\b P$ are not as expected if we were to think of them as the  spin Levi-Civita connection and the Schouten tensor.

These disappointments will be corrected after the dressing field approach is performed on the \emph{normal} version of the gauge structures just described.

\subsubsection{Twistors from gauge symmetry reduction via dressing}  
\label{Twistors from gauge symmetry reduction via dressing}  

As we did for the tractor case, we aim at erasing the conformal boost gauge symmetry $\K_1$ through a dressing field. In \cite{Attard-Francois2016} we found such a dressing field
\begin{align*}
 u_1 :\U\rarrow K_1, \qquad \text{ that is } \quad u_1=\begin{pmatrix}  1 & q & \tfrac{1}{2}qq^t \\ 0 & \1 & q^t \\ 0 & 0 & 1 \end{pmatrix},
\end{align*}
 as solution of the gauge-like constraint $\Sigma(\varpi^{u_1}):=\Tr(A^{u_1} - a^{u_1})=-4 a^{u_1}=-4(a-q\theta)=0$ which, once solved for $q$, gives $q_a=a_\mu {e^\mu}_a$, or in index free notation $q=a\cdot e\-$.\footnote{Beware of the fact that in this index free notation $a$ is the set of components of the $1$-form $a$. This should be clear from the context.} We checked that $u_1$ satisfies the defining property of a $\K_1$-dressing field: $u_1^{\gamma_1}={\gamma_1}\- u_1$. 
 
Had we not known this, we could have found a $\b\K_1$-dressing field in the twistor context
\begin{align*}
\b u_1 : \U \rarrow \b K_1 \qquad \text{ that is  } \quad \b u_1=\begin{pmatrix} \1 & -i \b q \\ 0 & \1 \end{pmatrix},
\end{align*}
as solution of the gauge-like constraint $\Sigma'(\b\varpi^{\b u_1}):=\Tr(\b A^{u_1} - \sfrac{a^{u_1}}{2})=-(a-q\theta)=0$. This gives indeed the same $q \in \RR^{4*}$ as above which is then mapped to $\b q\in$ Herm$(2, \CC)$. Then, using \eqref{GT_1} we can check that $\b q^{\b\gamma_1}=\b q - \b r$, so that
 $\b u_1$ satisfies the defining property of a $\b\K_1$-dressing field: $\b u_1^{\b\gamma_1}={\b\gamma_1}\-\b u_1$. 
 
 Or, not ignoring the work done to find $u_1$ in the tractor case, we map it to $\b u_1$ thanks to the group morphism \eqref{IsoH-bH} which secures the fact that the defining property is respected. And we are done. 
 \medskip
 
 With this $\b\K_1$-dressing field we can apply - the local version of - proposition \ref{P1} and form the $\b\K_1$-invariant composite fields
\begin{align}
\label{CompFields_1}
\b\varpi_1:\!&=\b\varpi^{\b u_1}\!=\b u_1 \- \b\varpi \b u_1 + \b u_1\-d\b u_1\!=\begin{pmatrix}[1.2]  - \b A_1^*   &  -i\b P_1 \\ i\b \theta & \b A_1  \end{pmatrix}
, \quad \b\Omega_1:=\b\Omega^{\b u_1}\!=\b u_1\-\b\Omega \b u_1=d\b \varpi_1+\b \varpi_1^2=
\begin{pmatrix}[1.2]  -( \b W_1^* - \sfrac{f_1}{2}\1 )  &  -i\b C_1 \\ i\b \Theta & \b W_1 -\sfrac{f_1}{2}\1  \end{pmatrix} \notag \\[1mm]
\psi_1:&=\b u_1\-\psi=\begin{pmatrix} \pi_1\\ \omega_1  \end{pmatrix}, 
\qquad \text{and}\qquad
 \b D_1\psi_1=d\psi_1+\b\varpi_1 \psi_1=\begin{pmatrix}[1.2] d\pi_1  -\b A_1^*\pi_1 -i\b P_1 \omega_1 \\ d\omega_1 + \b A_1\omega_1 + i\b\theta \pi_1 \end{pmatrix}=\begin{pmatrix}[1.2] \nabla_1\pi_1 -i\b P_1 \omega_1 \\ \nabla_1\omega_1 + i\b\theta \pi_1  \end{pmatrix}
\end{align}
As is usual ${\b D_1}^2 \psi_1 = \b\Omega_1 \psi_1$.  We notice that $f_1 = -2\Tr\left(\b\theta\b P_1 \right)=P_1\w \theta$ is the antisymmetric part of the tensor $P_1$.

The claim is twofold. First, we assert that $\psi_1$ is a twistor and that the covariant derivative $b D_1$ induced from the dressed spin conformal Cartan connection $\b\varpi_1$ is a ``generalized'' twistor connection. Second, the composite fields \eqref{CompFields_1} are gauge fields of a non-standard kind - such as described in section \ref{The composite fields as  a new kind of gauge fields} - w.r.t Weyl symmetry, but genuine gauge fields - according to section \ref {The composite fields as genuine gauge fields} - w.r.t Lorentz symmetry. Both assertions are supported by the analysis of the residual gauge transformations of these composite fields.

\paragraph{Residual gauge symmetries} Being $\b\K_1$-invariant  by construction, the composite fields \eqref{CompFields_1} are expected to display a $\b\K_0$-residual gauge symmetry. This group breaks down as a direct product of the (spin) Lorentz and Weyl groups: $\b K_0=\mathsf{SL}(2, \CC) \times \mathsf W$, where $\mathsf{SL}(2, \CC)\simeq S\!L(2, \CC) \oplus S\!L(2, \CC) ^*$. We focus on Lorentz symmetry first, then only bring our attention to Weyl symmetry. Here again we could use the results found in \cite{Attard-Francois2016} and map them via the isomorphism \eqref{IsoH-bH}, but to be complete we indicate how to reach the same results from first principles. 
\smallskip

The residual gauge transformations of the composite fields \eqref{CompFields_1} under the spin Lorentz gauge group, defined as 
$\SL:=\left\{ \mathsf S:\U \rarrow \mathsf{SL}(2, \CC)\ |\ \mathsf{S}=\b\gamma_{0|z=1},  {\mathsf S}^{\mathsf S'}={\mathsf S'}\- \mathsf S \mathsf S'  \right\}$, are inherited from that of the dressing field $\b u_1$. Using \eqref{GT_0} to compute $\b q^{ \mathsf S}=a^{ \mathsf S}\cdot (\b e^{ \mathsf S})\-=\b S^*\b q\b S$, one easily finds that $\b u_1^\mathsf S=\mathsf S\- \b u_1 \mathsf S$. This is a local instance of Proposition \ref{Prop2}, which then allows to conclude that the composite fields 
 are \emph{genuine} gauge fields w.r.t Lorentz gauge symmetry. Hence, from Corollary \ref{Cor3} follows that their residual $\SL$-gauge transformations are
\begin{align}
\label{CompFields_1_S}
&\b\varpi_1^\mathsf S={\mathsf S}\-  \b\varpi_1 \mathsf S  + { \mathsf S}\- d \mathsf S=
\begin{pmatrix}[1.2] -\left(  \b S^*\b A_1 {\b S}^{-1*} + d\b S^*{\b S}^{-1*} \right) &  -i\ \b S^* \b P_1 \b S \\  i\ {\b S}\-\b\theta {\b S}^{-1*} &  {\b S}\-\b A_1 \b S + {\b S}\- d\b S &  \end{pmatrix},
  \notag \\[1mm]
 &\b\Omega_1^\mathsf S={ \mathsf S}\- \b\Omega_1  \mathsf S=
 \begin{pmatrix}[1.2]  -\b S^*( \b W_1^* - \sfrac{f_1}{2}\1 ){\b S}^{-1*}  &  -i \b S^* \b C_1 \b S \\ i\  {\b S}\-\b\Theta {\b S}^{-1*} & {\b S}\-\left( \b W_1 -\sfrac{f_1}{2}\1 \right) \b S  \end{pmatrix},   \notag \\[1mm]
&\psi_1^\mathsf S= {\mathsf S}\- \psi_1= \begin{pmatrix} \b S^* \pi_1 \\ {\b S}\- \omega_1  \end{pmatrix},\qquad \text{ and } \qquad (\b D_1\psi_1)^\mathsf S=\b D_ 1^\mathsf S \psi_1^\mathsf S = {\mathsf S}\- \b D_1\psi_1.
\end{align}
We notice in particular that $\psi_1$ behaves as a standard section of a $\mathsf{SL}(2, \CC)$-associated bundle: $\sE_1\!=\sE^{\b u_1}\!=\P\times_\mathsf{SL} \CC^4$. 

We repeat the analysis for the Weyl symmetry. The residual transformations of the composite fields under the Weyl gauge group, that we define as $\W:=\left\{ Z:\U \rarrow \mathsf W\ |\  Z=\b\gamma_{0|S=\1}, Z^{Z'}=Z  \right\}$, are inherited from that of the dressing field $\b u_1$. Using \eqref{GT_0} to compute $\b q^Z=a^Z\cdot (\b e^Z)\-= z\- ( \b q + \b \Upsilon)$, with $\Upsilon:=z\-\d z\cdot e\- \rarrow \Upsilon_a=z\-\d_\mu z \ {e^\mu}_a$, one easily finds that 
\begin{align}
\label{WeylGT_u_1}
&\b u_1^Z=Z\- \b u_1 C(z)\quad  \text{ where the map }\quad  C: W \rarrow \b K_1 \mathsf W \subset \b H \quad \text{ is defined by }\\[1mm]
&C(z):=\b k_1(z)Z =\begin{pmatrix} \1 & -i\b \Upsilon \\ 0 & \1\end{pmatrix} \begin{pmatrix}z^{\sfrac{1}{2}}\1 & 0 \\ 0 &  z^{-\sfrac{1}{2}}\1 \end{pmatrix}= \begin{pmatrix}[1.2] z^{\sfrac{1}{2}} \1 &  -i\ z^{-\sfrac{1}{2}}\b \Upsilon \\ 0 & z^{-\sfrac{1}{2}}\1 \end{pmatrix}. \notag
\end{align}
 Notice that contrary to  genuine gauge group members, elements of type $C(z)$  do not form a group: $C(z)C(z')\neq C(zz')$. 
 Actually \eqref{WeylGT_u_1} is a local instance of Proposition \ref{P4} with $C$ a  $1$-$\alpha$-cocycle satisfying Proposition \ref{alpha_cocycle}. Indeed one can check that  $C(zz')=C(z'z)=C(z')\ {Z'}\- C(z) Z'$, which is the defining property of an abelian $1$-$\alpha$-cocycle. Furthermore, under a further $\W$-gauge transformation and due to $\b e^Z=z\b e$, one has $\b k_1(z)^{Z'}= {Z'}\- \b k_1(z)Z'$, which implies $C(z)^{Z'}= {Z'}\- C(z) Z'$. So if $\b u_1$ undergoes a a further $\W$-gauge transformation we have
 \begin{align*}
 \left(\b u_1^Z\right)^{Z'}=\left(Z^{Z'}\right)\- \b u_1^{Z'} C(z)^{Z'}=Z\-\ {Z'}\- \b u_1C(z')\ {Z'}\- C(z) Z'= (ZZ')\- \b u_1 C(zz'). 
 \end{align*}
 
 This implies that the composite fields \eqref{CompFields_1} are indeed instances of gauge fields of the new kind described in section \ref{The composite fields as  a new kind of gauge fields}. As a consequence, by Proposition \ref{P5} we have that their residual $\W$-gauge transformations  are
\begin{align}
\b\varpi_1^Z&=C(z)\- \b\varpi_1 C(z) + C(z)\-dC(z)= 
\begin{pmatrix}[1.2] -\b A^*_1 -(\b\Upsilon\b\theta)_0\ \  &  -i\ z\-\left[\b P_1  + \left(d\b\Upsilon -\b\Upsilon \b A_1 - \b A_1^* \b\Upsilon\right) -\b\Upsilon\b\theta\b\Upsilon \right]    \\ i\ z\b\theta  &   \b A_ 1 + (\b\theta\b\Upsilon)_0  \end{pmatrix},
  \label{varpi_1_Z}    \\[1mm]                     
\b\Omega_1^Z&=C(z)\- \b\Omega_1 C(z)=
\begin{pmatrix}[1.2] -\left( \b W_1^* -\sfrac{f_1}{2}\1 \right) - \b\Upsilon\b\Theta\ \ &   \ -i\ z\-\left( \b C_1 - \b\Upsilon\b W_1 - \b W_1^*\b\Upsilon + f_1\1 \b\Upsilon - \b\Upsilon\b\Theta\b\Upsilon  \right)   \\ i\ z\b\Theta &   \b W_1 -\sfrac{f_1}{2} \1 + \b\Theta\b\Upsilon   \end{pmatrix},         
  \label{Omega_1_Z}  \\[1mm]                    
\psi_1^Z&=C(z)\- \psi_1 = \begin{pmatrix}[1.2] z^{-\sfrac{1}{2}}\left( \pi_1 + i\b\Upsilon \omega_1\right) \\ z^{\sfrac{1}{2}} \omega_1 \end{pmatrix}, \qquad \text {and }\quad (\b D_1\psi_1)^Z=\b D_1^Z\psi_1^Z= C(z)\- \b D_1\psi_1.        \label{Twistor_Connection_1}
\end{align}

 First, notice that in \eqref{varpi_1_Z} now the Lorentz part $\b A_1$ of the composite field $\b\varpi_1$ indeed exhibits the known Weyl transformation for the spin connection and that $\b P_1$ transforms as the  Schouten tensor (in an orthonormal basis). But actually the former genuinely reduce to the latter only when one restricts to the  \emph{normal} case $\b\varpi_{\n, 1}$, so that $\b A_1$ is a function of $\b\theta$ and $\b P_1=\b \sP_1$ is symmetric and a function of $\b A_1$. So $f_1$ vanishes and we have
\begin{align}
\label{normal}
\b\Omega_{\text{\tiny N},1}=d\varpi_{\text{\tiny N},1}+\varpi_{\text{\tiny N},1}^2=\begin{pmatrix}[1.2]  -\b \sW_1^*   &  -i\b \sC_1 \\ 0 & \b \sW_1 \end{pmatrix}, 
\ \ \text{  and  } \ \ \b\Omega_{\text{\tiny N},1}^Z=C(z)\- \b\Omega_{\text{\tiny N},1}C(z)=\begin{pmatrix}[1.2]  -\b \sW_1^*   &  -i \ z\-\!\left(\b \sC_1- \b\Upsilon\b W_1 - \b \sW_1^*\b\Upsilon \right) \\ 0 & \b \sW_1 \end{pmatrix}.
\end{align}
We  see that $\b \sC_1=\nabla_1 \b\sP_1:=d\b\sP_1 +\b\sP_1 \b A_1 -\b A_1^* \b\sP_1$ is the Cotton tensor - and indeed transforms as such - while $\b \sW_1$ is the invariant Weyl tensor. 
\medskip

Notice that the first relation in \eqref{Twistor_Connection_1} is - modulo the $z$ factors - \eqref{GTtwistor}. So the dressed section $\psi_1$ is identified with a twistor field, section of a $C$-vector bundle  $\sE_1\!=\sE^{\b u_1}\!=\P \times_{C( W)}\CC^4$. 
The invariant bilinear form on $\sE$ defined by the group metric $\b\Sigma$  is also defined on $\sE_1$: $\left\langle \psi_1, \psi'_1  \right\rangle= \psi_1^T \b\Sigma \psi'_1$. Indeed since $C(z) \in \b K_1\mathsf W \subset \b H$, we have $\left\langle \psi_1^Z, {\psi'_1}^Z \right\rangle = \left\langle C(z)\-\psi_1, C(z)\- \psi'_1 \right\rangle = \psi_1^* (C(z)\-)^* \b\Sigma C(z)\- \psi_1'=\psi_1^*\b\Sigma \psi'_1 = \left\langle \psi_1, \psi_1' \right\rangle$.

Moreover, $\b D_1:=d + \b\varpi_1$ in \eqref{CompFields_1} is a generalization of the twistor connection \eqref{TE4}. But then the term ``connection'', while not inaccurate, could hide the fact that $\b\varpi_1$ is no more a standard connection w.r.t Weyl symmetry. So we shall prefer to call $\b D_1$ a  generalized twistor \emph{covariant derivative}. 
The usual twistor covariant derivative \eqref{TE4} is recovered by restriction to the dressing of the normal Cartan connection: $\b D_{\text{\tiny N},1}\psi_1=d\psi_1 + \b\varpi_{\text{\tiny N},1} \psi_1$. Then we have that  $\b D_{\text{\tiny N},1}^2\psi_1=\b\Omega_{\text{\tiny N},1} \vphi_1$ is just \eqref{twistor_curvature}.  We note that $\b\varpi_1$ being $\b\LieG$-valued, $\b D_1\b\Sigma=0$  and $\b D_1$ preserves the bilinear form $\langle\ \!, \rangle$. 

In short, by erasing via dressing the $\b\K_1$-gauge symmetry from the conformal Cartan gauge structure $\left( (\P, \varpi), \sE\right)$ over $\M$, we have recovered top-down the twistor bundle and twistor covariant derivative, $(\sE_1, \b D_{\text{\tiny N},1})$, as a special case of the $C$-vector bundle endowed with a covariant derivative $(\sE_1, \b D_1)$.  The link between the normal conformal Cartan connection and the twistor covariant derivative as already been noticed by Friedrich \cite{Friedrich77}, whose result we thus generalize.

The actions of $\SL$ and $\W$ on the composite fields $\chi_1$ are compatible and commute. Indeed, we have first that $\sS^\W=\sS$ so that on the one hand: $\left(\chi_1^\SL\right)^\W=\left(\chi_1^\mathsf S\right)^\W=\left( \chi_1^\W \right)^{\mathsf S^\W}=\left( \chi_1^{C(z)}\right)^\mathsf S=\chi_1^{C(z)\mathsf S}$. But then we also have $C(z)^\SL={\mathsf S}\- C(z)\mathsf S$, so on the other hand we get: $\left( \chi_1^\W \right)^\SL=\left( \chi_1^{C(z)}\right)^\SL=\left( \chi_1^\SL\right)^{C(z)^\SL}=\left(\chi_1^\mathsf S \right)^{{\mathsf S}\- C(z)\mathsf S}=\chi_1^{C(z)\mathsf S}$.
We should then refine our notation for the bundle $\sE_1$ and write $\sE_1=\P\times_{ C(\mathsf W)\ \! \mathsf{SL}} \CC^4$.
\smallskip

As is suggested by the considerations at the end of section \ref{The composite fields as genuine gauge fields}, the fact that the composite fields \eqref{CompFields_1} are genuine $\SL$-gauge fields satisfying \eqref{CompFields_1_S} entitles us to ask if a further dressing operation aiming at erasing Lorentz symmetry is possible. In \cite{Attard-Francois2016} we showed that in the case of tractors, the vielbein could be used to this purpose. But since there is no finite dimensional spin representation of $GL$, one  suspects that in the case at hand the vielbein cannot be used. This is indeed so. Furthermore, one just has to look at the $S\!L(2, \CC)$ gauge transformation of the vielbein to see that it is unsuited as a dressing field. So our process of symmetry reduction ends here.

\subsection{BRST treatment}
\label{BRST treatment}

The spin representation of the gauge group of the initial Cartan geometry is $\b\H$, so the associated ghost $\b v\in$ Lie$\b\H$ splits along the grading of $\b\LieH$,
\begin{align}
\b v=\b v_0+\b v_\rho=\b v_\epsilon+\b v_\ss+\b v_\rho=\begin{pmatrix} \sfrac{\epsilon}{2} & 0 \\  0 & \sfrac{\epsilon}{2} \end{pmatrix} + \begin{pmatrix} -\b s^* & 0 \\ 0  & \b s \end{pmatrix}+\begin{pmatrix} 0 & -i \b \rho \\ 0 & 0 \end{pmatrix}.
\end{align} 
The BRST operator splits accordingly as $s=s_0+s_1= s_\ww+s_\l+s_1$. Then the BRST algebra for the gauge fields $\chi=\{ \b\varpi, \b\Omega, \psi \}$ is
\begin{align}
s\b\varpi=-\b Dv=-d\b v -[\b\varpi, \b v], \qquad s\b\Omega=[\b\Omega, \b v], \qquad s \psi=-\b v\psi \qquad \text{and} \qquad s\b v=-{\b v}^2,
\end{align}
with the first and third relations in particular reproducing the infinitesimal versions of \eqref{GT_0}-\eqref{GT_1} and \eqref{GT_psi}. Denote by $\mathsf{BRST}$ this initial algebra.
As the general discussion of section \ref{Application to the BRST framework} showed, the dressing approach modifies it. 
\medskip

 We know from this  general discussion that the composite fields $\chi_1=\{ \b\varpi_1,\b\Omega_1, \psi_1 \}$ satisfy a modified BRST algebra formally similar but with composite ghost $\b v_1:=\b u_1\- \b v \b u_1 + \b u_1\-s\b u_1$. The inhomogeneous term can be found explicitly from the finite gauge transformations of $\b u_1$. Writing the linearizations (where the linear parameters are turned into ghosts) $\b\gamma_1\simeq\1 + \b v_\rho$ and $\sS \simeq\1 + \b v_\ss$, the BRST actions of $\b\K_1$ and $\SL$ are easily found to be  
\begin{align*}
\b u_1^{\b\gamma_1}=\b\gamma_1\-\b u_1 \quad \rarrow \quad s_1\b u_1=-\b v_\rho \b u_1  \qquad \text{and} \qquad \b u_1^\sS=\sS\- \b u_1 \sS \quad \rarrow \quad s_\l \b u_1=[\b u_1, \b v_\ss].
\end{align*}
This shows that the Lorentz sector gives an instance of the general result \eqref{NewGhost2}. Now, let us define the linearizations  $Z\simeq\1 + \b v_\epsilon$ and $\b k_1(z)\simeq\1 + \b k_1(\epsilon)$, so that $C(z)=\b k_1(z)Z\simeq\1+c(\epsilon)=\1 + \b k_1(\epsilon)+\b v_\epsilon$ where $\b k_1(\epsilon):=\begin{psmallmatrix} 0 & -i\b{\d\epsilon} \\ 0 & 0\end{psmallmatrix}$ with $\d\epsilon:=\d_\mu \epsilon\ {e^\mu}_a$. 
The BRST action of $\W$ is
\begin{align*}
\b u_1^Z=Z\-\b u_1 C(z) \quad \rarrow \quad s_\ww \b u_1=-\b v_\epsilon \b u_1 + \b u_1 c(\epsilon).
\end{align*}
This shows that the Weyl sector gives an instance of the general result \eqref{NewGhost3}. We then get the composite ghost
\begin{align}
\b v_1:&=\b u_1\- (\b v_\epsilon+\b v_\ss+\b v_\rho)  \b u_1 + \b u_1\- (s_\ww+s_\l +s_1) \b u_1,  \notag \\
	   &=\b u_1\- (\b v_\epsilon+\b v_\ss+\b v_\rho) \b u_1 + \b u_1\-  \big( -\b v_\epsilon \b u_1 + \b u_1 c(\epsilon) \ + \ [\b u_1, \b v_\ss] \ - \ \b v_\rho \b u_1 \big), \notag\\
	&=c(\epsilon) + \b v_\ss=\begin{pmatrix} -\left(\b s^* -\sfrac{\epsilon}{2}\1 \right) & -i\b{\d\epsilon} \\ 0 & \b  s - \sfrac{\epsilon}{2} \end{pmatrix}.
\end{align} 
We see that the ghost of conformal boosts $\b\rho$ has disappeared from this new ghost. This means that $s_1 \chi_1=0$, which reflects the $\b\K_1$-gauge invariance of the composite fields $\chi_1$. The composite ghost $\b v_1$ only depends on $\b v_\ss$ and $\epsilon$, it encodes the residual $\b \K_0$-gauge symmetry.
The BRST algebra for the composite fields $\chi_1$ is then explicitly
\begin{align*}
s\b\varpi_1&=-\b D_1\b v_1=-d\b v_1 -[\b\varpi_1, \b v_1]
=\begin{pmatrix}[1.2] \nabla^*\b s^* - (\b{\d\epsilon}\b\theta)_0 &  -i \left( -\epsilon \b P_1-\nabla(\b{\d\epsilon}) - (\b P_1 \b s - \b s^* \b P_1) \right) \\ i \left( -\epsilon\b\theta + \b s\b\theta-\b\theta \b s^* \right) & -\nabla \b s -(\b\theta \b{\d\epsilon})_0 \end{pmatrix},
\end{align*}
where  $\nabla \b s=d\b s + [\b A_1, \b s]$, $\nabla^* \b s^*=d\b s^* - [\b A_1^*, \b s^*]$ and $\nabla(\b{\d\epsilon})=d\b{\d\epsilon} + \b{\d\epsilon} \b A_1- \b A_1^* \b{\d\epsilon}$.
\begin{align*}
 &s\b\Omega_1=[\b\Omega_1, \b v_1]=\begin{pmatrix}[1.2] - \b{\d\epsilon}\b\Theta+  [\b W_1^*,\b s^*] & -i\left(  -\epsilon \b C_1 + (\b{\d\epsilon} \b W_ 1 - \b W_1^*\b{\d\epsilon} ) + (\b C_1 \b s - \b s^*\b C_1) \right)  \\ 
 i\left( -\epsilon\b\Theta - \b s\b\Theta - \b\Theta\b s^*  \right) & [\b W_1, \b s] + \b\Theta \b{\d\epsilon}\end{pmatrix}, \\[1mm]
&\text{in the normal case }\quad s\b\Omega_{\n, 1}=\begin{pmatrix}[1.2]   [\b W_1^*,\b s^*] & -i\left(  -\epsilon \b C_1 + (\b{\d\epsilon} \b W_ 1 - \b W_1^*\b{\d\epsilon} ) + (\b C_1 \b s - \b s^*\b C_1) \right)  \\ 
0 & [\b W_1, \b s] \end{pmatrix},\\[1mm]
&s\psi_1=-\b v_1\psi_1 =\begin{pmatrix}[1.2](\b s^* -\sfrac{\epsilon}{2}\1) \pi_1 + i\b{\d\epsilon}\omega_1 \\ -(\b s - \sfrac{\epsilon}{2}\1) \omega_1\end{pmatrix}, 
\qquad \text{and} \qquad s\b v_1=-{\b v_1}^2=\begin{pmatrix}  \b s^* \b s^* & -i\left(\b{\d\epsilon} \b s - \b s^*\b{\d\epsilon}\right) \\ 0 & \b s\b s  \end{pmatrix}.
\end{align*}
Denote this algebra $\mathsf{BRST}_{\ww, \l}$. Since $\b v_1=c(\epsilon)+\b v_\ss$, it splits naturally as a Lorentz and a Weyl subalgebras, $s=s_\ww+s_\l$. The Lorentz algebra $\mathsf{BRST}_\l$
, obtained by setting $\epsilon=0$, shows the composites fields $\chi_1$ to be genuine Lorentz gauge fields (compare with \eqref{CompFields_1_S}). The Weyl algebra $\mathsf{BRST}_\ww$
, obtained by setting $\b s=0$, shows $\chi_1$ to be non-standard Weyl gauge fields (compare with  \eqref{varpi_1_Z}-\eqref{Twistor_Connection_1}).

\subsection{The Yang-Mills Lagrangian of the spin conformal Cartan  connection and Weyl gavity}
\label{The Yang-Mills Lagrangian of the spin conformal Cartan connection}

In this section we first highlight the fact that the natural Yang-Mills Lagrangian for the twistor $1$-form $\b\varpi_{\n, 1}$ actually reproduces Weyl gravity. This gives a geometrically clear interpretation of the results of Merkulov \cite{Merkulov1984_I} and shows that local twistors and the twistor covariant derivative  plainly define a Yang-Mills theory (see the foonote p. $133$ of \cite{Penrose-Rindler-vol2}). We  also argue that the dressed spin conformal Cartan connection  $\b\varpi_1$ is almost the ``modified twistor connection''  introduced in \cite{Merkulov1984_II} where is advocated an approach to describe both Weyl gravity and Electromagnetism as a unified conformally invariant theory. But then we have to higlight a drawback of this approach.
\medskip

First let us define the invariant Killing forms for $\su(2, 2)$ and $\so(2, 4)$. Given $\b A, \b B \in \su(2, 2)$, the Killing form is $B_{\su(2,2)}(\b A, \b B):=\tfrac{1}{2}\left(  \Tr(\b A\b B) + \Tr(\b B^*\b A^*)\right)$. Given $A, B \in \so(2, 4)$, the Killing form is $B_{\so(2, 4)}(A, B):=\Tr(AB)$. The same formulae hold for $\sl(2, \CC)$ and $\so(1, 3)$ and define $B_{\sl(2, \CC)}$ and $B_{\so(1, 3)}$

\paragraph{Twistorial conformal gravity}
In a previous work \cite{AFL2016_I} we showed that the Yang-Mills Lagrangian associated to the dressed normal Cartan connection $\varpi_{\n, 1}$ reproduces Weyl gravity,
\begin{align*}
L_\text{YM}(\varpi_{\n, 1}):= \tfrac{1}{2}B_{\so(2, 4)}(\Omega_{\n,1},  * \Omega_{\n, 1})=  \tfrac{1}{2}B_{\so(1, 3)}(\sW_1, *\sW_1)= \tfrac{1}{2}\Tr(\sW_1 \w *\sW_1)=L_\text{Weyl}(\theta),
\end{align*}
where $*$ is the Hodge star operator on differential forms on $\M$, and $\sW_1$ is the Weyl tensor.
It is a fact that the only true degrees of freedom of $\varpi_{\n, 1}$ are those of the vierbein $\theta$ since in the normal case $A_1$ is expressed as a function of $\theta$, and $P_1$ as a function of $A_1$ through $R_1=dA_1 +{A_1}^2$. So it is no surprise that the field equations obtained on the one hand by varying the action $S_\text{YM}(\varpi_{\n, 1})$ w.r.t $\varpi_{\n, 1}$, and on the other hand by varying the action $S_\text{Weyl}(\theta)$ w.r.t $\theta$, should coincide. In the first case we obtain the Yang-Mills equation for the (dressed) normal conformal Cartan connection, and in the second case we obtain the Bach equation
\begin{align*}
\frac{\delta S_\text{YM}(\varpi_{\n, 1})}{\delta\varpi_{\n, 1}}=0 \quad \rarrow \quad D_1*\Omega_{\n, 1}=0    \quad\qquad \Leftrightarrow \quad\qquad \frac{\delta S_\text{Weyl}(\theta)}{\delta\theta}=0 \quad \rarrow\quad  B_{ab}=0,
\end{align*}
where $B_{ab}$ is the Bach tensor. The equivalence of the field equations was first noticed by Korzy{\' n}ski and Lewandowski \cite{Korz-Lewand-2003}. The origin of this equivalence is clear from our perspective.

Now, consider  the normal spin conformal Cartan connection $\b\varpi_\n$ as a $\b\H$-gauge potential, the associated Yang-Mills Lagrangian is 
\begin{align*}
L_\text{YM}(\b\varpi_\n):=\tfrac{1}{4}B_{\su(2, 2)}(\b\Omega_\n, * \b\Omega_\n)= \tfrac{1}{2} B_{\sl(2, \CC)}(\b W, *\b W)
\end{align*}
 Using proposition \ref{Prop-Lagrangian} of section \ref{Local aspects and Physics}, we know that since $\b u_1:\U \rarrow \b K_1$ we have that 
\begin{align*}
L_\text{YM}(\b\varpi_\n)=L_\text{YM}(\b\varpi_{\n, 1})=\tfrac{1}{4}B_{\su(2, 2)}(\b\Omega_{\n,1},  * \b\Omega_{\n, 1})=\tfrac{1}{2} B_{\sl(2, \CC)}(\b \sW_1, *\b \sW_1)
\end{align*}
In other words, the actual symmetry of the theory is not $\b\H=\b\K_0\b\K_1$ with gauge potential $\varpi_\n$, but $\b\K_0$, with gauge potential the twistor $1$-form $\b\varpi_{\n, 1}$. 
By the way, since $\sl(2,\CC) \simeq \so(1, 3)$,  their Killing forms must coincide. As a matter of fact, for $A, B \mapsto \b A, \b B$ one has  $B_{\sl(2, \CC)}(\b A, \b B)=B_{\so(1, 3)}(A, B)$. This means that $L_\text{YM}(\b\varpi_\n)=L_\text{Weyl}(\theta)$. So, as above, we have equivalence between the Yang-Mills equation for the twistor $1$-form and the Bach equation
\begin{align*}
\frac{\delta S_\text{YM}(\b\varpi_{\n, 1})}{\delta\b\varpi_{\n, 1}}=0 \quad \rarrow \quad \b D_1*\b\Omega_{\n, 1}=0    \quad\qquad \Leftrightarrow \quad\qquad \frac{\delta S_\text{Weyl}(\theta)}{\delta\theta}=0 \quad \rarrow\quad  B_{ab}=0,
\end{align*}

This equivalence was first noticed by Merkulov \cite{Merkulov1984_I} and deemed surprising. We see that it is quite natural from our perspective, according to which one just has to observe that the twistor $1$-form $\b\varpi_{\n, 1}$ is the spin version of the dressed normal conformal Cartan connection $\varpi_{\n, 1}$ whose only true degrees of freedom are those of the vierbein $\theta$.

\paragraph{Attempt at a twistorial unification of conformal gravity and electromagnetism}
In our notation the usual twistor $1$-form is $\b\varpi_{\n, 1}=\begin{psmallmatrix}  -\b A_1^* & -i \b \sP_1 \\ \theta & \b A_1 \end{psmallmatrix}$, with $\b A_1$ the spin Levi-Civita connection and $\b \sP_1$ the symmetric Schouten tensor. In \cite{Merkulov1984_II} a twistorial unification of Weyl gravity and electromagnetism  was proposed. To do so,  a modification of the twistor $1$-form $\b\varpi_{\n,1}$  was suggested, according to which to $\b\sP_1$ is added  a Weyl invariant antisymmetric component $\b f_1$, interpreted as the Maxwell-Faraday tensor. Let us write $\b\sP_1+\b f_1:=\b P_1$ so that the modified twistor $1$-form is 
\begin{align}
\label{Merkulov-version} 
\b\varpi'_1=\begin{pmatrix}  -\b A_1^* & -i \b P_1 \\ \theta & \b A_1 \end{pmatrix}, 
\quad \text{ with curvature }\quad
\b\Omega'_1=d\b\varpi_1 +{\b\varpi_1}^2=\begin{pmatrix} -\left( \b\sW_1^* -\sfrac{f_1}{2}\1 \right) &  -i\b C_1 \\ 0 & \b \sW_1 - \sfrac{f_1}{2} \end{pmatrix},
\end{align}
where $f_1=P_1 \w \theta$ is the antisymmetric part of $P_1$, and $\b C_1= \b\sC_1 + \nabla_1\b f_1$. From there, a natural Yang-Mills Lagrangian is proposed
\begin{align}
\label{Lag-Merk}
L_\text{YM}(\b\varpi'_1):=\tfrac{1}{4}B_{\su(2, 2)}(\b\Omega'_1,  * \b\Omega'_1)=\tfrac{1}{2} B_{\sl(2, \CC)}(\b \sW_1, *\b \sW_1) +\tfrac{1}{4}f_1\w *f_1.
\end{align}
It is indeed Lorentz and Weyl invariant since the curvature transforms as
\begin{align*}  
\b\Omega_1^{'\mathsf S}={ \mathsf S}\- \b\Omega'_1  \mathsf S=
 \begin{pmatrix}[1.2]  -\left(\b S^* \b \sW_1^*{\b S}^{-1*}  - \sfrac{f_1}{2}\1 \right)  &  -i \b S^* \b C_1 \b S \\ 0 & \left({\b S}\- \b W_1 \b S -\sfrac{f_1}{2}\1 \right)   \end{pmatrix},
\\
\b\Omega_1^{'Z}=C(z)\- \b\Omega'_1 C(z)=\begin{pmatrix}[1.2]  -\left(\b \sW_1^* -\sfrac{f_1}{2}\1 \right)  &  -i \ z\-\!\left(\b C_1- \b\Upsilon\b W_1 - \b \sW_1^*\b\Upsilon+f_1\1 \b\Upsilon \right) \\ 0 & \b \sW_1-\sfrac{f_1}{2}\1 \end{pmatrix}
\end{align*}
\medskip
which are special cases of \eqref{CompFields_1_S} and \eqref{Omega_1_Z}. Variation of $S_\text{YM}(\b\varpi'_1)$ w.r.t $\b\varpi'_1$ gives the Yang-Mills equation with source $\b D_1 *\b\Omega'_1=\kappa T$,  where $\kappa$ is an irrelevant constant and the source term is the (Hodge dual of) the energy-momentum tensor of the electromagnetic field $T_{ab}=\tfrac{1}{4}f_{cd}f^{cd}\eta_{ab}+f_{bc}f^{cd}\eta_{da}$. The field equation infolds as the source-free dynamical Maxwell equation, $\nabla^af_{ab}=0$, and the Weyl gravity  coupling to the electromagnetic field, $B_{ab}=\kappa T_{ab}$. 

All this would be satisfying is not for a hidden flaw. Notice that \eqref{Merkulov-version} is sort of in between the dressed spin conformal Cartan connection $\b\varpi_1$ \eqref{CompFields_1} and its normal version which is the twistor $1$-form $\b\varpi_{\n, 1}$ \eqref{normal}. But are we free to posit such an intermediate object? Actually no. Indeed, by requiring $\b\varpi'_1$ to be torsion free as in \eqref{Merkulov-version}, one an check that the $\b\LieG_{-1}$-sector of the  Bianchi identity  $\b D'_1\b\Omega'_1=0$ gives $\b\theta \sW_1^*+\sW_1\b\theta=f_1\b\theta$. The latter  implies $f_1=0$ by virtue of the fact that the Ricci contraction of the Weyl tensor vanishes, ${{\sW_1}^a}_{bac}=0$. Then the modified twistor $1$-form $\b\varpi'_1$ reduces to the standard one $\b\varpi_{\n, 1}$, the curvature $\b\Omega'_1$  reduces to $\b\Omega_{\n, 1}$ and one is left with the Lagrangian for Weyl gravity alone. This shows, we would argue, that the electromagnetic field cannot be introduced in the way proposed in \cite{Merkulov1984_II} because the underlying geometry is too rigid, so to speak.

\section{Conclusion} 
\label{Conclusion} 

Tractors and twistors are frameworks devised to deal with conformal calculus on manifolds. Whereas it has been noticed that both are vector bundles associated to the conformal Cartan principal bundle endowed with its normal Cartan connection, it is often deemed more direct and intuitive to produce them, bottom-up, from the prolongation of defining differential equations, the Almost Einstein and Twistor equations respectively. In this paper  we have proposed a straightforward and top-down gauge theoretic construction of twistors via the dressing field method of gauge symmetries reduction. 
 Our scheme involves very little effort, and nothing beyond $2\times 2$ matrix multiplication.

 We started with the conformal Cartan gauge structure $\left ((\P, \varpi), \sE \right)$ over $\M$ with gauge symmetry given by the gauge group $\b\H$ - comprising Weyl $\W$, Lorentz $\SL$ and conformal boosts $\b \K_1$  groups - acting on gauge variables $\chi=\{\b\varpi, \b\Omega, \psi, \b D\psi\}$ which are the spin representation of the conformal Cartan connection, its curvature, a section of the  associated spin $\CC^4$-vector bundle $\sE$ and its covariant derivative.  

Applying the dressing field approach we showed that  $\b\K_1$-invariant composite fields $\chi_1$ could be constructed thanks to a dressing field $\b u_1$ built out of parts of the Cartan connection $\b\varpi$.
In particular, the dressed section $\psi_1 \in \sE_1$ was shown to be indeed a tractor and $\b D_1=d + \b\varpi_1$  a generalized tractor covariant derivative, the usual one being induced by the dressed normal conformal Cartan connection: $\b D_{\n, 1}=d +\b\varpi_{\n, 1}$. We have thus generalized the results of \cite{Friedrich77} linking the twistor connection to the normal conformal Cartan connection.

 Furthermore we stressed that, while the composite fields $\chi_1$ are genuine gauge fields w.r.t the residual Lorentz gauge symmetry $\SL$, they are gauge fields of a non-standard kind w.r.t the residual Weyl gauge symmetry $\W$. Such non-standard gauge fields, resulting from the dressing field method, implement the gauge principle of physics in a satisfactory way but are not of the same geometric nature than the fields usually underlying gauge theories. The trator bundle with connection $(\sE_1,\b D_{\n, 1})$, as a restriction of $(\sE_1, \b D_1)$, is then seen to be an instance of non-standard gauge structure over $\M$. 

The initial conformal gauge structure is encoded infinitesimally in the initial $\mathsf{BRST}$ algebra satisfied by the gauge variables $\chi$. As a general result the dressing field approach modifies the BRST algebra of a gauge structure. We then provided the new algebra $\mathsf{BRST}_{\ww, \l}$ satisfied by the composite fields $\chi_1$. 

We ended in showing that our approach allows to make clear the equivalence between Weyl gravity  and the Yang-Mills gauge theory of the twistor connection, first observed in \cite{Merkulov1984_I}. We also formulated a critic of the twistorial unification of Weyl gravity with electromagnetism proposed in \cite{Merkulov1984_II}, showing that the underlying geometry was too rigid to allow a unification thus conceived.

We concluded that our scheme of symmetry reduction giving us twistors had reached a final point since no dressing field was available for   further erasing  either the Lorentz Symmetry - as is possible in the tractor case thanks to the vielbein as dressing field - or the Weyl symmetry. But actually, one can conceive a conformal gauge theory involving both twistors and tractors, where the Weyl symmetry is further reduced thanks to a dressing field extracted from a tractor field, which then plays the role of a Higgs field. One would obtain a Lorentz-gauge theory where Dirac spinors and operator could be extracted from invariant twistors and twistor derivative respectively.
This will be investigated in a forthcoming  paper.

\section*{Acknowledgement}  

We wish to thank Serge \textsc{Lazzarini} and Thierry \textsc{Masson} (CPT Marseille) for their encouragements and  for supporting discussions while this work was under completion.

{
\small
 \bibliography{Biblio}
}

\end{document}